\renewcommand{\baselinestretch}{1.65}
\definecolor{light-gray}{gray}{0.90}
\newtheorem{Proposition}{Proposition}
    \def\Complex{{\rm\rule[.23ex]{.03em}{1.1ex}\kern-.3em{C}}}
    \newcommand{\be}{\begin{equation}} \newcommand{\ee}{\end{equation}}
    \newcommand{\bea}{\begin{eqnarray}} \newcommand{\eea}{\end{eqnarray}}
    \newcommand{\benum}{\begin{enumerate}} \newcommand{\eenum}{\end{enumerate}}
    \newcommand{\qa}{{\bf a}}
    \newcommand{\qb}{{\bf b}}
    \newcommand{\qg}{{\bf g}}
    \newcommand{\qh}{{\bf h}}
    \newcommand{\qv}{{\bf v}}
    \newcommand{\qy}{{\bf y}}
    \newcommand{\qz}{{\bf z}}
    \newcommand{\qF}{{\bf F}}
    \newcommand{\qH}{{\bf H}}
    \newcommand{\qP}{{\bf P}}
    \newcommand{\qQ}{{\bf Q}}
    \newcommand{\qpsi}{{\boldsymbol \psi}}
    \newcommand{\qtheta}{{\boldsymbol \theta}}
    \newcommand{\qTheta}{{\boldsymbol \Theta}}
    \newcommand{\qGamma}{{\boldsymbol \Gamma}}
    \newcommand{\qtau}{{\boldsymbol \tau}}
        \newcommand*{\argmax}{\operatornamewithlimits{argmax}\limits}
\begin{document}

\title{Fast Antenna and Beam Switching Method for mmWave Handsets with Hand Blockage}

\author{~Wan-Ting~Shih,~Chao-Kai~Wen,~\IEEEmembership{Member,~IEEE},~Shang-Ho~(Lawrence)~Tsai,~\IEEEmembership{Senior Member,~IEEE}, and~Shi~Jin,~\IEEEmembership{Senior Member,~IEEE}

    \thanks{{W.-T.~Shih} is with the Institute of Electrical Control Engineering, National Chiao Tung University, Hsinchu 30010, Taiwan, Email: {\rm  sydney2317076@gmail.com}.}
    \thanks{{C.-K.~Wen} is with the Institute of Communications Engineering, National Sun Yat-sen University, Kaohsiung 80424, Taiwan, Email: {\rm chaokai.wen@mail.nsysu.edu.tw}.}
    \thanks{{S.-H.~Tsai} is with the department of Electrical Engineering, National Chiao Tung University, Hsinchu 30010, Taiwan,
        Email: {\rm shanghot@mail.nctu.edu.tw}.}
    \thanks{{S.~Jin} is with the National Mobile Communications Research Laboratory, Southeast University, Nanjing 210096, P. R. China, Email: {\rm  jinshi@seu.edu.cn}.}
}

\maketitle
\vspace{-1.5cm}
\begin{abstract}
\footnotesize
Many operators have been bullish on the role of millimeter-wave (mmWave) communications in fifth-generation (5G) mobile broadband because of its capability of delivering extreme data speeds and capacity. However, mmWave comes with challenges related to significantly high path loss and susceptibility to blockage.
Particularly, when mmWave communication is applied to a mobile terminal device, communication can be frequently broken because of rampant hand blockage. Although a number of mobile phone companies have suggested configuring multiple sets of antenna modules at different locations on a mobile phone to circumvent this problem, identifying an optimal antenna module and a beam pair by simultaneously opening multiple sets of antenna modules causes the problem of excessive power consumption and device costs. In this study, a fast antenna and beam switching method termed Fast-ABS is proposed. In this method, only one antenna module is used for the reception to predict the best beam of other antenna modules. As such, unmasked antenna modules and their corresponding beam pairs can be rapidly selected for switching to avoid the problem of poor quality or disconnection of communications caused by hand blockage. Thorough analysis and extensive simulations, which include the derivation of relevant Cram\'{e}r-Rao lower bounds, show that the performance of Fast-ABS is close to that of an oracle solution that can instantaneously identify the best beam of other antenna modules even in complex multipath scenarios. Furthermore, Fast-ABS is implemented on a software defined radio and integrated into a 5G New Radio physical layer. Over-the-air experiments reveal that Fast-ABS can achieve efficient and seamless connectivity despite hand blockage.
\end{abstract}

\begin{IEEEkeywords}
	mmWave, multiple antenna modules, beam switching, antenna selection
\end{IEEEkeywords}

\section{Introduction}

Current research in wireless cellular networks is developing technologies that can meet the global surge in demand for mobile data \cite{forecast2019cisco}. In comparison with congested bands below 6\,GHz, millimeter-wave (mmWave) communications \cite{mmWave1,mmWave2,mmWave3,mmWave4} provide more available bandwidths than existing 4G systems do; as such, mmWave communications have become a promising solution. They have been recommended as a key technology for 5G mobile broadband not only by recent research advances in new mmWave systems \cite{newmmWave1,newmmWave2,newmmWave3,newmmWave4,newmmWave5,newmmWave6,newmmWave7} but also by popular network operators and vendors \cite{vendors} who had done large-scale field trials \cite{mmWave1,Verizon}. Unfortunately, mmWave signals suffer from severe path losses because of high carrier frequencies. mmWave devices must relay via array antennas to focus their radio frequency (RF) energy through narrow directional beams and compensate for attenuation loss. In the case of narrow beams, small changes in body position relative to mobile devices can cause dramatic changes in signal strength. Connectivity can be lost frequently. Therefore, efficiently aligning the beam between a base station (BS) and user equipment (UE) is a critical task to realize mmWave communications in 5G systems \cite{newmmWave3,newmmWave4,newmmWave5,newmmWave6,newmmWave7}.

In 3GPP 5G New Radio (NR), a series of operations called beam management is standardized to establish and retain a suitable \emph{beam pair} that refers to a good connectivity between a transmitter beam direction and a corresponding receiver beam direction. Specifically, beam management includes three different levels: (i) in initial access \cite{IA1,IA2,IA3}, a preliminary beam pair should be established between a BS and a UE in downlink and uplink transmission directions before data are transmitted; ii) in beam adjustment \cite{BT1,BT2}, once a preliminary beam pair is established, the UE should continuously update the beam pair to maintain a good connection every time because of its rotation or slight movement; and (iii) in beam recovery, if a connection with the BS is broken due to the excessively violent movement of the UE, a recovery procedure should be performed to restore communication between the BS and the UE. Additional procedures targeting beam failure should be introduced by using the UE, and multiple beam scanning should be employed to complete beam recovery, which may require hundreds of milliseconds of waiting time to identify a new beam pair \cite{newmmWave7}. Therefore, avoiding a beam failure event is equivalent to reducing latency caused by beam recovery.

Beam alignment technology has shown remarkable advancements \cite{5GNR,Balevi-GLOBECOM19}, but its designs have been largely developed in free space propagation without a hand blockage effect. However, when mobile phones are used, users' fingers shield the mmWave array antenna, thereby failing to receive signals. To overcome this problem, a number of mobile phone companies \cite{Ouyang-Patent19,Qualcomm-17,Sony-18} have suggested configuring multiple sets of mmWave antenna modules on a mobile phone. As shown in Figure \ref{fig:handblockage}, unmasked antenna modules and their corresponding beam pairs can be selected for switching to avoid the problem of poor quality or disconnection of mmWave communication due to the shielding of the used antenna module. However, identifying an optimal antenna module and a beam pair by simultaneously opening multiple sets of antenna modules causes excessive power consumption and device costs. In general, one RF chain is a reasonable architecture for the initial generation of mmWave mobile phones \cite{turn_1}. Such an architecture can support the switching mechanism over antenna modules and receiver beams to select the one that maximizes antenna gain. Therefore, developing an effective switching antenna mechanism that can maintain the communication performance of mobile phones with low latency becomes a critical task.

Studies \cite{turn_1,turn_2} have suggested switching to another antenna module in turn and rescanning the best beam direction, i.e., reprocessing the initial access, for reception when an object blocks the received antenna module. In this approach, a long latency period can be required to find the best antenna module and build beam alignment. In another study \cite{op_1}, the knowledge of a user's handgrip of a mobile phone (e.g., when browsing or taking photos) is applied to switch antennas and design corresponding codebooks for reducing the number of searches. However, this method requires knowing the usage mode information of mobile phones, and performance still depends on the size of a search.

\begin{figure}
    \begin{center}
        \resizebox{3.6in}{!}{%
            \includegraphics*{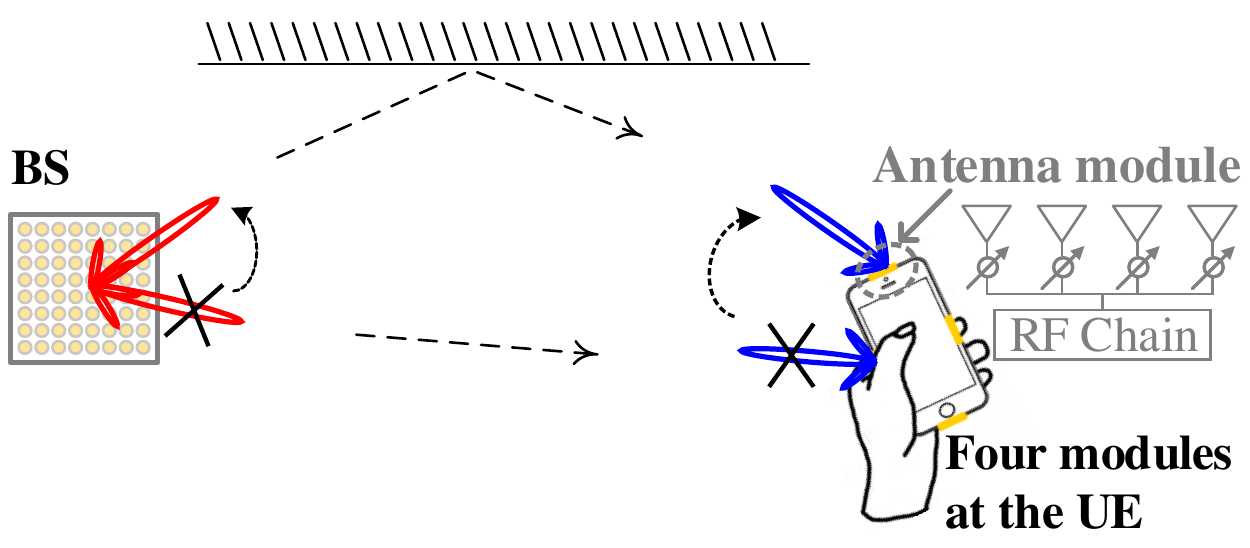} }%
        \caption{Avoiding hand blockage by switching an antenna module and a beam pair. }\label{fig:handblockage}
    \end{center}
\end{figure}

In this study, a {\bf fast} {\bf a}ntenna and {\bf b}eam {\bf s}witching method referred to as Fast-ABS is proposed for mmWave handsets with multiple antenna modules. Our method is basically built on the basis of two observations.
\begin{itemize}
    \item First, although the steering vector changes with different antenna modules and orientations, the underlying physical paths of signals propagating from the BS to the compact UE remain unchanged in free space. Therefore, if one antenna module can extract relevant parameters (i.e., complex gain and directionality) of paths, then the spatial information of a channel of other antenna modules can be reconstructed without scanning all antenna modules.

    \item Second, although the underlying propagation paths for a blocked antenna module can be complex because of the coupling and reflection caused by a user's fingers, their detailed properties are irrelevant because using a blocked antenna module should be avoided. Only a simple power detector is required to detect whether an antenna module is experiencing hand blockage.
\end{itemize}
With Fast-ABS, the properties of paths from one receiver antenna module, along with the power information of other antenna modules, may be extracted.
As a result, all the receiver beam directions of all antenna modules can be ranked in the UE through the channel measurement of a current antenna module.\footnote{
We have preliminary verified this idea in the conference version of this study \cite{Shih-ICC20}.}
If the UE can wait a while to probe a channel from different transmit beams of the BS, then the UE can have large candidates to obtain a better beam pair not only for antenna modules but also for transmitter beams.

To achieve the goal of the first observation, we infer the relevant parameters, i.e., complex gains and angle of arrival (AoA), of dominant paths by utilizing beam-specific channel responses measured under different beams. We derive the Cram\'{e}r-Rao lower bound (CRLB) of the mean square error (MSE) of relevant parameters estimated by beam-specific measurements. The CRLB shows that a low MSE can be achieved by selecting a proper set of beams so that relevant parameters can be obtained by utilizing a small number of received beams. Specifically, our results show that only three to four probings are required to infer relevant parameters. Furthermore, we derive the CRLB of the channel MSE reconstructed on the basis of the extracted parameters, and analysis is carried out by simulating the characteristics of mmWave wireless channels. Our results demonstrate that the receiving performance achieved by Fast-ABS is almost the same as that observed by exhaustively scanning (ES) through all angles and antenna modules. Finally, we implement and evaluate Fast-ABS in a 5G NR device that supports two antenna modules in various static and dynamic settings. Our experiments show that the signal-to-noise ratio (SNR) under Fast-ABS is close to that under an optimal beam obtained by ES.

\section{System Model and Problem Formulation}

We consider a 3GPP 5G NR-compatible mmWave communication system with a BS and a UE. A synchronization signal block (SSB) is periodically broadcasted via the BS by using $S$ (directional) beams. SSB signals are acquired via the UE to establish a communication link by utilizing $M$ (directional) beams during the initial access. The UE is equipped with $P$ antenna modules, and a finite-sized analog beam codebook is used for beamforming. We assume that the UE can focus on one beam (or direction) at a time because of component cost and power consumption considerations, i.e., one antenna module and one beam are employed at a time. If a signal is transmitted from the BS via the $s$-th beam over a channel with $L_s$ paths, then
the channel observed at the UE through the $m$-th receiver beam of the $p$-th antenna module is given by
\begin{equation} \label{eq:hm}
h_{s,m}^{[p]}(t) = \sum_{l=1}^{L_s} g_{s,m,l}^{[p]} \cdot A_{m}^{[p]}{\left(\theta_{s,m,l}^{[p]}\right)} \cdot \delta{\left(t-\tau_{s,m,l}^{[p]}\right)},
\end{equation}
where $g_{s,m,l}^{[p]}$, $\tau_{s,m,l}^{[p]}$, and $\theta_{s,m,l}^{[p]}$ denote the corresponding complex-valued channel gain, AoA, and time of arrival (ToA) of the $l$-th path, respectively; and $A_{m}^{[p]}(\theta)$ represents the receiver beamforming gain along the $\theta$ direction. In practice, the signal path has azimuth and elevation components in 3D space, but to make the representation simple, we omit the elevation direction.
Our proposed method can be extend to the case with elevation components. Channel state information (CSI) is obtained with the UE by using different receiver beams and antenna modules during initial access.

Once the initial beam pair is established through the initial access, an appropriate beam pair is maintained for data communication through a process called beam adjustment. Notably, this process is accomplished by measuring a downlink CSI reference signal (CSI-RS) and does not interrupt transmission. Transmitter beams entrain the information of the transmission beam direction in their respective configured CSI-RSs. After transmitter- and receiver- side beams are swept, performance can be obtained from each beam pair via a receiver. Afterward, the new beam pair can be confirmed by the result of CSI-RS.

The aforementioned measurements can be performed in the frequency domain because a 5G NR system is operated via orthogonal frequency-division multiplexing (OFDM). The channel response of \eqref{eq:hm} in the frequency domain is given by
\begin{equation} \label{eq:Hm}
H_{s,m}^{[p]}[f_k] = \sum_{l=1}^{L_s} g_{s,m,l}^{[p]} \cdot A_{m}^{[p]}{\left(\theta_{s,m,l}^{[p]}\right)} \cdot e^{-j 2 \pi \tau_{s,m,l}^{[p]} f_k },
\end{equation}
where $f_k$ is the $k$-th sub-carrier frequency. We can estimate $ H_{s,m}^{[p]}[\cdot] $ at subcarriers $ f_1, f_2, \cdots, f_{N_s} $ by dividing subcarrier outputs
by the known RS. Over a set of receiver beams ${\{m = 1, 2, \cdots, M\}}$, the receiver can obtain an ${M \times N_s}$ CSI matrix of the $p$-th antenna module as follows:
\begin{equation} \label{eq:bH}
\qH_{s}^{[p]} = \left[
\begin{matrix}
H_{s,1}^{[p]}[f_1] &  H_{s,1}^{[p]}[f_2] & \cdots & H_{s,1}^{[p]}[f_{N_s}] \\
\vdots &    & \ddots   \\
H_{s,M}^{[p]}[f_1] &  H_{s,M}^{[p]}[f_2] & \cdots & H_{s,M}^{[p]}[f_{N_s}] \\
\end{matrix}
\right].
\end{equation}
By integrating \eqref{eq:Hm}, \eqref{eq:bH} can be expressed as
\begin{equation} \label{eq:H_mat}
\qH_{s}^{[p]} = \sum_{l=1}^{L_s} g_{s,m,l}^{[p]} \qa^{[p]}(\theta_{s,m,l}^{[p]}) \qb^{H}(\tau_{s,m,l}^{[p]}),
\end{equation}
where
\begin{equation}
\qa^{[p]}(\theta) \triangleq \left[  A_{1}^{[p]}(\theta)\, A_{2}^{[p]}(\theta)\, \cdots \, A_{M}^{[p]}(\theta) \right]^T, ~~
\qb(\tau) \triangleq \left[   e^{-j 2 \pi \tau f_1 } \, e^{-j 2 \pi \tau f_2 } \, \cdots \, e^{-j 2 \pi \tau f_{N_s} } \right]^{H}.
\end{equation}

This study mainly focuses on the period of beam adjustment rather than the period of initial access because the former is critical for seamless connectivity under hand-blockage. As illustrated in Figure \ref{fig:handblockage}, the UE as the receiver has four antenna modules placed over four edges. Currently available mobile devices are restricted to powering only one antenna module at a time, so simultaneously and immediately determining the antenna module and beam pair is a challenging task. Two major problems have to be addressed: First, if an object blocks the received antenna module, then the UE switches to another module and re-scan the best receiver beam. Scanning may be invoked persistently among multiple antenna modules over $M \times P$ beams, exacerbating latency. Although the use of a large number of antenna modules can theoretically provide good coverage, if module switching cannot be performed with a low beam management overhead, then the function becomes difficult, becoming a detriment rather than a benefit. Second, even worse, the optimal beam pair between the BS and the UE does not necessarily correspond to transmitter and receiver beams that are \emph{physically} directed at each other. A direct path may be blocked, and a reflective path may provide better connectivity to the other antenna module (Figure \ref{fig:handblockage}). Re-scanning likely invokes huge beam pairs between the BS and multiple antenna modules (i.e., $S \times M \times P$ beams), resulting in unacceptable latency.

\section{Fast-ABS}
\begin{figure*}
    \begin{center}
        \resizebox{6.75in}{!}{%
            \includegraphics*{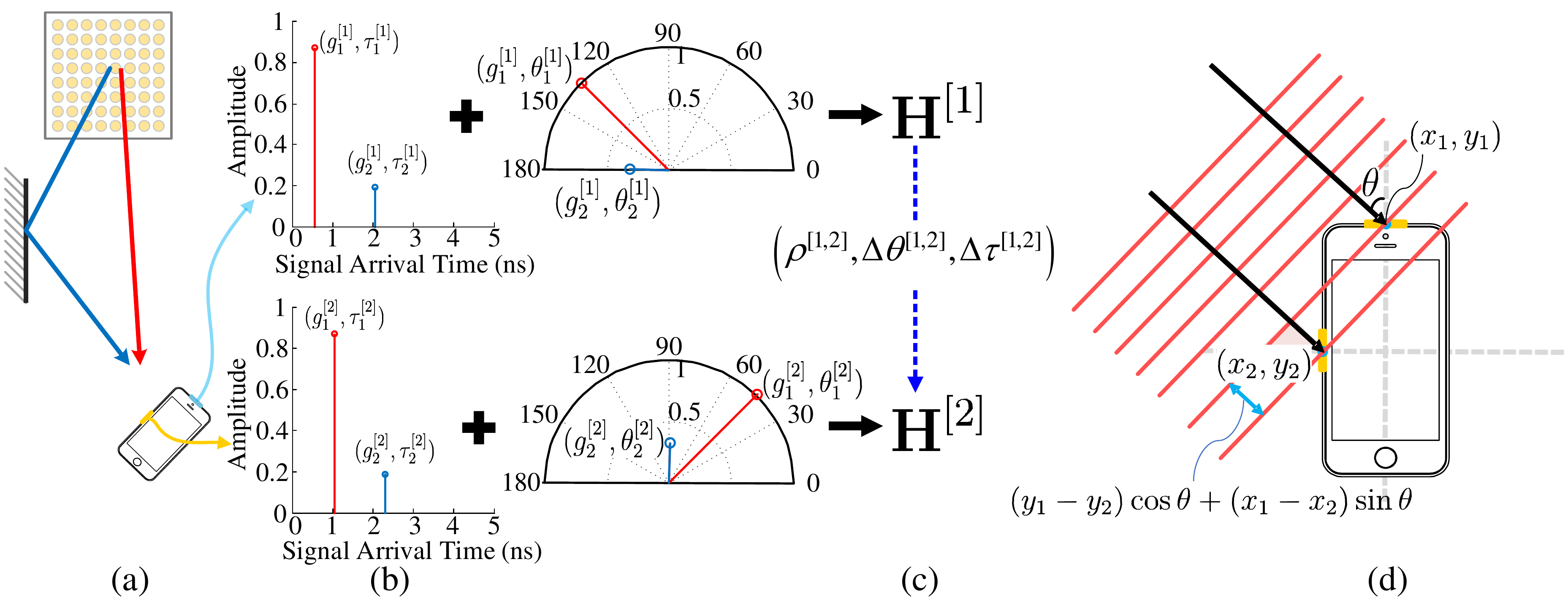} }%
        \caption{Rationale behind Fast-ABS: (a) Two antenna modules share the same physical paths to reach a mobile device.
            (b) Measured channel responses show the ToA and AoA of the two paths under both antenna modules.
            (c) Although the two channels look extremely different, we can map the three-tuples of the paths back to the channels of antenna module 1 to predict the channels of antenna module 2.
            (d) Diagram of AoA and ToA between two different antenna modules. }\label{fig:Intuition}
    \end{center}
\end{figure*}

\subsection{Rationale}

To understand the rationale of Fast-ABS, we first analyze the CSI represented in \eqref{eq:H_mat}, i.e., the channel of a single transmission beam reaching an antenna module. We define each antenna module consisting of an $N$ antenna element uniform linear array\footnote{Relative to the configuration with a single antenna on each antenna module, the usage of an array antenna achieves a directional beam easily and does not reduce the beamforming gain due to hand blockage.}
with a uniform separation $d$. Each antenna element is equipped with a phase shifter to achieve beam steering capabilities, as shown in Figure \ref{fig:handblockage}.
Then the beam pattern given the analog beam codebook ${\{w_m(n), n=0, \ldots, N-1 \}}$ of the $m$-th beam can be expressed as \cite{beamforming}:
\begin{equation}\label{eq:Am}
A_{m}^{[p]}(\theta) = \sum_{n=0}^{N-1} w_m(n) e^{-j 2 \pi n \frac{d \cos \theta }{\lambda}},
\end{equation}
where $\lambda$ is the wavelength of the wireless signal. Clearly, given a fixed array configuration and beam codebook, the beam pattern is irrelevant to the antenna module and only dependent on its AoA $\theta$. Hence, the superscript ${[p]}$ can be removed from $A_{m}^{[p]} $ and $\qa^{[p]} $ in the subsequent descriptions; i.e.,
\begin{equation}
 A_{m}^{[p]}(\cdot) \rightarrow A_{m}(\cdot) , ~ \qa^{[p]}(\cdot)  \rightarrow \qa(\cdot) .
\end{equation}
$\qa(\theta)$ and $\qb(\tau)$ in \eqref{eq:H_mat} is deterministic, so the $l$-th path can be fully characterized by three-tuples in the form $(g_{s,m,l}^{[p]}, \theta_{s,m,l}^{[p]}, \tau_{s,m,l}^{[p]})$. As a result, CSI, $\qH_{s}^{[p]}$, can be determined as long as three-tuples for each of the $L_s$ paths in the mmWave channel are available.

Although changing the receiver beams and antenna modules lead to different channel measurements at the receiver, the underlying propagation paths traversed by each receiver beam and antenna module should be invariant. Different receiver beams on the same antenna module have varying analog beam weights $\{ w_m(n) \}_{m=1}^{M}$ but have fixed channel gain, ToA, and AoAs. Different antenna modules in various locations also have varying but fixed orientations. Therefore, if a receiver can extract the three-tuples of each path, then the receiver can reconstruct an estimate of the channel for any receiver beams and antenna modules. With CSI matrices associated with all antenna modules, the receiver can determine the optimal antenna module and beam direction. The details and implementation of this observation are described in the succeeding subsections.

\subsection{Relationship of Three-tuples between Antenna Modules}

We describe how the three-tuples of each path are associated with antenna modules. Different receiver beams on the same antenna module have the same channel gain, ToA, and AoAs. Therefore, we can remove the receiver beam index $m$ from the three-tuple parameters:
\begin{equation}
 g_{s,m,l}^{[p]}\rightarrow g_{s,l}^{[p]}, ~\theta_{s,m,l}^{[p]}\rightarrow \theta_{s,l}^{[p]}, ~ \tau_{s,m,l}^{[p]} \rightarrow \tau_{s,l}^{[p]}.
\end{equation}
As long as $g_{s,l}^{[p]}$ is available, we can obtain the entire beam-patterns $\qa^{[p]}$ even without scanning all the receiver beams in the analog beam codebook. We discuss a method to estimate the three-tuple parameters later in Section III.D. We independently apply the principles described in this subsection to each transmitter beam independently. To avoid the key features of Fast-ABS being obfuscated by the unavoidably intricate notation, we focus on a transmitter beam and omit transmitter beam index $s$ from channel-related parameters; i.e., we simply use $(g_{l}^{[p]}, \theta_{l}^{[p]}, \tau_{l}^{[p]})$ in this subsection.

For example, we consider an environment where the BS transmits one beam to the UE traversing two propagation paths, Figure \ref{fig:Intuition}(a). If antenna modules 1 (blue) and 2 (yellow) at the receiver can simultaneously receive signals, then we let the corresponding three tuples of the signal paths at antenna module 1 be $(g_{1}^{[1]}, \theta_{1}^{[1]}, \tau_{1}^{[1]})$ and $(g_{2}^{[1]}, \theta_{2}^{[1]}, \tau_{2}^{[1]})$, and the antenna module 2 be $(g_{1}^{[2]}, \theta_{1}^{[2]}, \tau_{1}^{[2]})$ and $(g_{2}^{[2]}, \theta_{2}^{[2]}, \tau_{2}^{[2]})$, as shown in Figure \ref{fig:Intuition}(b).
The three quick observations can be found:
\begin{itemize}
    \item (OB-1). The AoA of each path is rotated at a fixed angle between antenna modules 1 and 2. Specifically, we have $\theta_{1}^{[2]} = \theta_{1}^{[1]} + 90^{\circ}$ and $\theta_{2}^{[2]} = \theta_{2}^{[1]} + 90^{\circ}$.

    \item (OB-2). A fixed delay shift seems absent in the ToA between antenna modules 1 and 2; i.e., $\tau_{1}^{[2]} - \tau_{1}^{[1]} \neq \tau_{2}^{[2]} - \tau_{2}^{[1]}$.

    \item (OB-3). The channel gain of each path between antenna modules 1 and 2 is identical, i.e., $g_{1}^{[1]} =  g_{1}^{[2]} $ and $g_{2}^{[1]} = g_{2}^{[2]}$.

\end{itemize}
These observations are described in detail as follows.

{\bf (OB-1)} The (absolute) AoAs of paths arriving at the UE are determined by the position of the UE and
its surrounding environment. Different AoAs between antenna modules result from the rotation angles of antenna modules on the UE. Fortunately, the position of antenna modules on a mobile device is fixed after hardware configuration is completed.
Therefore, we can predict or measure the fixed angular difference between antenna modules. For example, we consider that the two antenna modules in Figure \ref{fig:Intuition}(d) are placed on the edge of the UE. The AoA difference between antenna modules 1 and 2 is about $90^\circ$. We verify this observation through a practical measurement in Section V.B.

{\bf (OB-2)} Although ToAs between antenna modules 1 and 2 do not have a fixed delay shift, their time difference can be determined from the AoA and the position of the antenna module in accordance with the following proposition.
\begin{Proposition}
    Assume that the coordinates of antenna modules are central at $(x_1,y_1)$ and $(x_2,y_2)$, as shown in Figure \ref{fig:Intuition}(d).
    If the AoA of a path to antenna module 2 is $\theta$, then the propagation delay time
    between the two antennas module is
    \begin{equation}  \label{eq:delta_tau}
    \Delta\tau  = |y_1-y_2|\cos\theta-|x_1-x_2|\sin\theta.
    \end{equation}
\end{Proposition}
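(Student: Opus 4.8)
The plan is to treat the final leg of the shared propagation path as a far-field plane wave and to obtain $\Delta\tau$ as the projection of the inter-module baseline onto the propagation direction. First I would set up the geometry with the module centers at $\mathbf{r}_1=(x_1,y_1)$ and $\mathbf{r}_2=(x_2,y_2)$ and invoke a far-field assumption: because the two modules sit on a compact handset, their separation is negligible relative to the distance to the last scatterer (or to the BS on a line-of-sight path). Consequently the ray reaching both modules is, to first order, a single plane wave with one common propagation direction $\hat{\mathbf{k}}$, and the two modules merely intercept the same family of wavefronts at two nearby points. This is the same path-invariance already exploited in (OB-1): the absolute AoA is a property of the environment and of the path, not of the individual module.

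Next I would use the elementary fact that, for a plane wave, the arrival time at a point $\mathbf{r}$ is $\tau(\mathbf{r})=\tau_0+\tfrac1c\,\mathbf{r}\cdot\hat{\mathbf{k}}$, since the excess travel distance relative to any reference wavefront equals the signed projection $\mathbf{r}\cdot\hat{\mathbf{k}}$. Subtracting the two modules, the common offset $\tau_0$, and with it the entire upstream portion of the path, cancels, leaving $\Delta\tau=\tau(\mathbf{r}_1)-\tau(\mathbf{r}_2)=\tfrac1c(\mathbf{r}_1-\mathbf{r}_2)\cdot\hat{\mathbf{k}}$. This cancellation is the crux of the argument: it is precisely what makes $\Delta\tau$ depend only on the two module positions and the AoA, and not on the unknown history of the path.

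Finally I would express $\hat{\mathbf{k}}$ through the AoA $\theta$ of Figure \ref{fig:Intuition}(d). With that figure's angle convention the propagation direction is $\hat{\mathbf{k}}=(-\sin\theta,\cos\theta)$, so that
\begin{equation}
(\mathbf{r}_1-\mathbf{r}_2)\cdot\hat{\mathbf{k}}=(y_1-y_2)\cos\theta-(x_1-x_2)\sin\theta .
\end{equation}
The placement of the modules in the figure fixes the signs of $y_1-y_2$ and $x_1-x_2$, which lets me replace them by $|y_1-y_2|$ and $|x_1-x_2|$ and recover \eqref{eq:delta_tau}, with the constant $1/c$ normalized away as is implicit in the statement.

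The main obstacle is not the algebra but the modeling in the first two paragraphs: I must argue that the far-field/plane-wave reduction is accurate at handset scale so that the upstream path cancels cleanly, and I must pin down the angle reference and the sign of the baseline from Figure \ref{fig:Intuition}(d) in order to justify using absolute values rather than signed differences. I would also flag explicitly where the factor $1/c$ has been absorbed, since \eqref{eq:delta_tau} is written in distance-like units.
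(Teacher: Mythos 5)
Your proposal is correct, and it rests on the same physical fact as the paper's proof --- the extra distance a locally planar wavefront travels between the two module centers --- but the two derivations are packaged quite differently. The paper argues purely by triangle trigonometry read off Figure \ref{fig:Intuition}(d): the red line (the wavefront through module 1) cuts out a right triangle with acute angle $\theta$, opposite side $|y_1-y_2|$ and adjacent side $|y_1-y_2|\cot\theta$; subtracting $|x_1-x_2|$ gives the hypotenuse $|y_1-y_2|\cot\theta-|x_1-x_2|$ of a second triangle, and multiplying by $\sin\theta$ yields $|y_1-y_2|\cos\theta-|x_1-x_2|\sin\theta$. Your route instead writes the arrival time of a plane wave at a point as $\tau(\mathbf{r})=\tau_0+\tfrac1c\,\mathbf{r}\cdot\hat{\mathbf{k}}$ and obtains $\Delta\tau$ as the projection $\tfrac1c(\mathbf{r}_1-\mathbf{r}_2)\cdot\hat{\mathbf{k}}$ with $\hat{\mathbf{k}}=(-\sin\theta,\cos\theta)$. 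What your version buys: the far-field/plane-wave hypothesis and the cancellation of the upstream path are stated explicitly rather than being implicit in the figure, the computation is coordinate-based rather than dependent on reading side lengths from a drawing, and you correctly flag that both the statement and the paper's proof silently absorb the factor $1/c$ (the paper's ``propagation delay'' is really a path-length difference). What the paper's version buys is brevity and a direct visual tie to the figure the reader is already looking at. The only soft spot in your write-up is the last step, where you replace the signed differences $y_1-y_2$, $x_1-x_2$ by absolute values ``because the figure fixes the signs'': this is exactly as figure-dependent as the paper's own construction, so it is not a gap relative to the paper, but it is the one place where your otherwise convention-free argument must appeal to the specific module placement and to the orientation convention defining $\theta$ and $\hat{\mathbf{k}}$.
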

\begin{proof}
    The red line through $(x_1,y_1)$ cuts out a triangle with the acute angle $\theta$, the opposite side $|y_1-y_2|$, and the adjacent side $|y_1-y_2| \cot\theta $. Then, the hypotenuse of the (red) triangle is obtained as $ |y_1-y_2| \cot\theta - |x_1-x_2| $. Finally, the propagation delay can be calculated by $ {\sin\theta} (|y_1-y_2| \cot\theta - |x_1-x_2|)$, which can be rearranged as \eqref{eq:delta_tau}.
\end{proof}

{\bf (OB-3)} When AoAs and ToAs are determined via the above descriptions, complex channel gains do not need to absorb phase differences because of the position and rotations of antenna modules. Hence,
the complex channel gains of antenna modules should be identical under free-space propagation.
Even when the channel gains of antenna modules may differ from those of a linear scalar or phase offset, such differences are path independent and do not affect the best antenna module and beam selection. However, if hand blockage is considered, the above assumption is not valid anymore The loss from hand blockage on antenna gains can reach up to ${20-25}$ dB \cite{turn_1}, implying that an antenna module without hand blockage should be selected as the primary receiver antenna.
In this case, we should measure the channel gain of other antenna modules. This function can be simply achieved by using a power detector. This detector is applied at the output of a (image rejection) bandpass filter. Each antenna module is equipped with one power detector and used to measure power based on a pseudo-omni beam. As a result, the channel gain ratio between antenna modules can be obtained by utilizing the power detectors.

Consequently, if we can extract $(g_{l}^{[p]}, \theta_{l}^{[p]}, \tau_{l}^{[p]})$ from the antenna module $p$, then the receiver can infer $(g_{l}^{[q]}, \theta_{l}^{[q]}, \tau_{l}^{[q]})$ for antenna module $q$. Specifically, we use
\begin{subequations} \label{eq:thetal+taul+rho}
    \begin{align}
    \dot{\theta}_{l}^{[q]} &\leftarrow \theta_{l}^{[p]} + \Delta \theta^{[p,q]}, \label{eq:thetal+delta} \\
    \dot{\tau}_{l}^{[q]} &\leftarrow \tau_{l}^{[p]} + \Delta \tau_{l}^{[p,q]}, \label{eq:tau+delta}\\
    \dot{g}_{l}^{[q]} &\leftarrow \rho^{[p,q]} \cdot g_{l}^{[p]}, \label{eq:rho}
    \end{align}
\end{subequations}
for $l=1,\cdots, L$, where $\Delta \theta^{[p,q]}$ denotes the angular rotation of $p$ with respect to (w.r.t.) $q$, $\Delta \tau_{l}^{[p,q]}$ based on \eqref{eq:delta_tau} denotes the delay time of $p$ w.r.t. $q$, and $\rho^{[p,q]}$ is the power ratio of $p$ to $q$. An interesting observation from \eqref{eq:thetal+taul+rho} is that angular rotation $\Delta \theta^{[p,q]}$ and power ratio $\rho^{[p,q]}$ do not vary with paths, while delay time $\Delta \tau_{l}^{[p,q]}$ varies with paths. In practice, each mobile shall have different $\Delta\theta^{[p,q]}$ due to the configuration of the antenna modules or the material of the shell. Therefore, calibration is required to obtain a lookup table composed of the corresponding $\Delta \theta^{[p,q]}$.

\subsection{Antenna Switching and Beam Alignment}

We can use \eqref{eq:thetal+taul+rho} to estimate the corresponding channel of antenna module $q$ by extracting $(g_{s,l}^{[p]}, \theta_{s,l}^{[p]}, \tau_{s,l}^{[p]})$ of $L_s$ paths from antenna module $p$ for the $s$-th transmit beam as follows:
\begin{equation}
\dot{\qH}_{s}^{[q]} = \sum_{l=1}^{L_s} \dot{g}_{s,l}^{[q]} \qa(\dot{\theta}_{s,l}^{[q]}) \qb^{H}(\dot{\tau}_{s,l}^{[q]}),
\end{equation}
which we called \emph{virtual} CSI. The UE may not know the beamforming scheme and the antenna configuration of the BS, so we regard the channels formed by different transmitter beams as independent channels. According to 5G NR specifications, the BS can emit CSI-RS corresponding to different transmit beams during beam adjustment. Therefore, the UE can sequentially acquire the virtual CSIs corresponding to these transmit beams without interrupting communications. All candidate beams can be found without physically rescanning the entire receiver beams and antenna modules by reconstructing the virtual CSI matrix.

We can predict the candidate transmitter and receiver beam pairs among all the antenna modules by using the virtual CSI and calculating
\begin{equation} \label{eq:Bm}
B_{s,m}^{[p]} = \sum_{\forall f_k} \left|\dot{H}_{s,m}^{[p]}[f_k] \right|^2, ~~\forall s, m, p ,
\end{equation}
where
\begin{equation} \label{eq:tHm}
 \dot{H}_{s,m}^{[p]}[f_k] = \sum_{l=1}^{L_s} \dot{g}_{s,l}^{[p]} \cdot A_{m}{\left(\dot{\theta}_{s,l}^{[p]}\right)} \cdot e^{-j 2 \pi \dot{\tau}_{s,l}^{[p]} f_k },
\end{equation}
is the entry of $\dot{\qH}_{s}^{[p]}$. We can keep track of a small set of ``good'' beam pairs and quickly switch to the best one before the current beam fails. Beams can be switched freely among the antenna modules because
we have considered all the available beams of multiple antenna modules.

From \eqref{eq:Hm}, we observe that ToA $\dot{\tau}_{s,l}^{[q]}$ changes the phase of each path at different frequencies.
The receiver beam can only focus on one direction at a time because analog beamforming is adopted;
that is, the selected beam should be applied to all the subcarriers. We infer that
the relative ToA $\Delta \tau_{l}^{[p,q]}$ between $p$ and $q$ is irrelevant in analog beam selection.
We verify this inference through simulations in a later section. Therefore, we can change the form of the virtual CSI \eqref{eq:tHm} to
\begin{equation} \label{eq:Hm^vir_simple}
\ddot{H}_{s,m}^{[q]}[f_k] = \sum_{l=1}^{L_s} \dot{g}_{s,l}^{[q]} \cdot A_{m}(\dot{\theta}_{s,l}^{[q]}) \cdot e^{-j 2 \pi \tau_{s,l}  f_k },
\end{equation}
by assuming that all the antenna modules simultaneously receive path signals. Notably, \eqref{eq:Hm^vir_simple} differs from
\eqref{eq:tHm} only in terms of path ToAs $\tau_{s,l}$. This simplification can facilitate the implementation of
\eqref{eq:thetal+taul+rho} through a simple table lookup for AoAs and doing nothing for ToAs.

Although the underlying propagation paths for the blocked antenna module may be different because of coupling and reflection caused by fingers, the exact propagation properties of this antenna module are unimportant because selecting a blocked module should be avoided. Specifically, if an antenna module is hand blocked, then the antenna module's $\rho^{[p,q]}$ is a small value. In this case, this antenna module is not selected eventually.

\subsection{Path Parameter Estimation Algorithm}

Estimating the ToA, AoA, and gain of multipath from a channel is a critical task in realizing Fast-ABS.
This task should be conducted under beam-specific CSI measurements of different receiver beam patterns under the same antenna module. To ease notations, we focus on a transmitter beam and thus omit transmitter beam index $s$ and antenna module index $p$ from the channel-related parameters in this subsection. With this simplification, we rewrite \eqref{eq:H_mat} in a concise formulation as follows:
\begin{equation} \label{eq:sH_mat}
 \qH = \sum_{l=1}^{L} g_{l} \qa(\theta_{l}) \qb^{H}(\tau_{l}),
\end{equation}
The vectorization of $\qH$ yields
\begin{equation} \label{eq:vec_h}
\qh \triangleq {\sf vec}(\qH) = \sum_{l=1}^{L} g_{l} \qv(\theta_{l},\tau_{l}),
\end{equation}
where ${\qv(\theta,\tau) \triangleq \qb(\tau) \otimes \qa(\theta)}$.

From \eqref{eq:vec_h}, beam-specific CSI measurements can be modeled as
\begin{equation} \label{eq:vec_y}
\qy = \sum_{l=1}^{L} g_{l} \qv(\theta_{l},\tau_{l}) +\qz,
\end{equation}
where $\qz$ is the additive noise vector. The three-tuples of the paths $(\qg, \qtheta ,\qtau) \triangleq \{(g_{l}, \theta_{l} ,\tau_{l}) \}_{l=1}^{L}$
can be estimated by jointly minimizing
\begin{equation} \label{eq:costJ}
J(\qg, \qtheta ,\qtau) = \left\| \qy - \sum_{l=1}^{L} g_{l} \qv(\theta_{l},\tau_{l}) \right\|_2^2.
\end{equation}
As a solution to the aforementioned optimization problem, numerous algorithms have been developed over the past half-century. Any algorithm with high accuracy \cite{repNOMP1,repNOMP2} can be used in Fast-ABS to estimate the ToA, AoA, and gain of multipath.
In this paper, we employ the Newtonized orthogonal matching pursuit (NOMP) algorithm \cite{NOMP} in Fast-ABS, because we find that NOMP provides better performance and lower complexity than
many existing algorithms \cite{NOMP_ref}.

In NOMP, a detection estimation method is to identify each signal path and minimize \eqref{eq:costJ}
with low complexity. The strongest signal path is initially identified and
from $\qy$. Then, the weak signal path is determined using the residual signal $\qy_{\rm r} = \qy -  \hat{g} \qv(\hat{\theta} ,\hat{\tau} )$. Let
\begin{equation}
J_{\rm r}(g, \theta ,\tau) = \left\| \qy_{\rm r} -  g \qv(\theta ,\tau ) \right\|_2^2.
\end{equation}
Specifically, the above procedure consists of two stages:
\begin{enumerate}
    \item In coarse detection, the coarse estimates
    of AoA, ToA, and complex gain are obtained using pre-computed $\{ \qv(\theta ,\tau ) \}$
    \begin{subequations}
        \begin{align}
        (\hat{\theta},\hat{\tau}) &= \argmax_{\theta \in \qTheta, \tau \in \qGamma} |\qv^{H}(\theta,\tau) \qy|^2,  \label{eq:coarse_theta}\\
        \hat{g} &= {\qv^{H}(\hat{\theta},\hat{\tau}) \qy}/{ \| \qv (\hat{\theta},\hat{\tau}) \|_2^2 },
        \end{align}
    \end{subequations}
    where $\qTheta$ and $\qGamma$ denote the discretized sets of AoAs and ToAs, respectively.

    \item In refinement, estimates are refined using Newton's method:
    \begin{subequations}
        \begin{align}
        (\hat{\theta},\hat{\tau}) &\leftarrow (\hat{\theta},\hat{\tau}) - [ \nabla^2 J_{\rm r} (\hat{g}, \hat{\theta},\hat{\tau} )]^{-1} \nabla J_{\rm r} (\hat{g}, \hat{\theta},\hat{\tau}) , \\
        \hat{g} &\leftarrow  {\qv^{H}(\hat{\theta},\hat{\tau}) \qy}/{ \| \qv (\hat{\theta},\hat{\tau}) \|_2^2 },
        \end{align}
    \end{subequations}
    where $ \nabla^2 J_{\rm r} $  and $\nabla J_{\rm r}$ denote the Hessian and gradient
    of $J$, respectively, with respect to $(\theta ,\tau)$ at the current estimate $(\hat{\theta},\hat{\tau})$ provided by \eqref{eq:coarse_theta}.

\end{enumerate}
The algorithm is repeated with the residual signal $\qy_{\rm r} $ to estimate other paths. The refinement steps
are repeated after each new detection for all the paths in a cyclic manner for a few rounds to improve accuracy.

We end this subsection by remarking two points. First, ES is a common method in mmWave systems to find the best beam direction with substantial searching latency. The search space should be partitioned to search for the best beam direction. With these segmentation angles, the corresponding received SNR (RSNR) of each beam can be obtained. The performance of ES depends on the size of the fine-grained partitions of the search space. However, this approach causes unacceptable latency for analog-based systems that need to perform beam scanning and find the best beam pair. We regard the result corresponding to ES with fine-grained codebook as an oracle solution. As long as the three-tuple of the channel is available, all the candidate beams can be found without physically scanning the entire receiver beams. The NOMP algorithm can estimate the three tuples of a channel by using a small number of received beams. Simulations show that the channel reconstruction approach combined with the NOMP algorithm can achieve the oracle solution.

Second, the beam-specific CSI measurements $\qy$ in \eqref{eq:vec_y} are obtained through a set of receiver beams. Therefore, the remaining problem is how to determine proper receiver beams so that the NOMP algorithm can estimate the three tuples of a channel by using a small number of received beams. We discuss this beam selection problem later in the next section.

\subsection{Summary}
\begin{figure}
    \begin{center}
        \resizebox{2.4in}{!}{%
            \includegraphics*{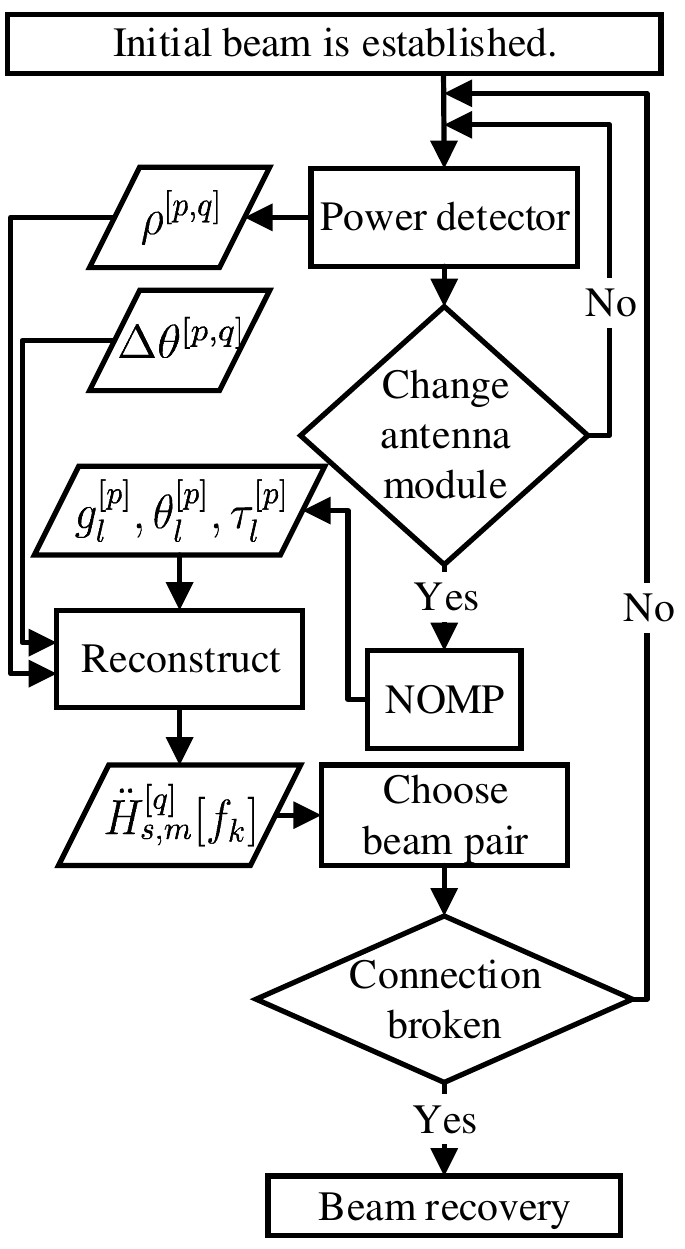} }%
        \caption{State-flow of Fast-ABS.}\label{fig:state-flow}
    \end{center}
\end{figure}

The summary of Fast-ABS is illustrated in Figure \ref{fig:state-flow}. We assume that a connection has been established during the initial access; that is, a beam pair has been initially established between the BS and the UE in the downlink directions. With Fast-ABS, the transmitter- and receiver-side beam adjustment is performed as follows. First, power detection is carried out on all antenna modules on the UE, which is used to obtain the power ratio $\rho^{[p,q]}$ between the antenna modules. Given a transmit beam from the BS, said the $s$-th beam, beam sweeping is conducted on an antenna module of the UE to measure a configured CSI-RS in sequence. Then, three tuples for each $L_s$ path are extracted by using the NOMP algorithm, and the corresponding virtual CSI \eqref{eq:Hm^vir_simple} of other antenna modules is constructed. If the link quality provided by the current antenna module is significantly lower than that of the other antenna modules, then switching to the best antenna module and its corresponding best receiver beam can be achieved with the UE. Otherwise, beam sweeping can be performed with the BS so that the UE can be employed to obtain the virtual CSI corresponding to other transmitter beams from the BS. Next, the channel conditions of the beam pairs $\{ B_{s,m}^{[p]} \}$ overall $m, p$ and available $s$ are calculated via the UE in accordance with \eqref{eq:Bm}. Switching to the best antenna module and its corresponding best transmitter and receiver beam pair can be carried out on the basis of $\{ B_{s,m}^{[p]}, \forall s,m,p \}$ via the UE. Notably, excessively violent movement and hand blockage on mobile phones are the usual causes of broken connections. When a connection is broken, a mobile phone takes a long time to restore a communication link between the BS and the UE at the beam recovery stage. Fast-ABS is mainly used to avoid frequent disrupted connections due to hand blockage.

\section{Performance Analysis}

\subsection{CRLB}

In this subsection, we analyze the lower bound of the estimation regarding the three tuples mentioned in the previous section.
Let $\qpsi = [\qpsi_1^T, \ldots, \qpsi_L^T]^T\in\mathbb{R}^{4L\times 1} $ be the vector consisting of all the unknown path parameters in which
\begin{equation} \label{eq:defpsi}
\qpsi_l = {\left[ |g_l|, \angle{g_l}, \theta_l, \tau_l \right]}^T \in \mathbb{R}^{4\times 1}
\end{equation}
consists of the parameters of the $l$-th path, where $|g_l|$ and $\angle{g_l}$ are the absolute and phase of the complex-valued channel gain, respectively. Mobile devices can only power one antenna module to estimate the three-tuple parameters at a time, said $p$, so we initially analyze the lower bound based on a single antenna module. In addition, for convenience of notation, we focus on the derivation of a single transmission beam. Using notation $\qpsi_l$ and  removing $p$ and $s$ in \eqref{eq:Hm}, we can rewrite the channel response (or CSI) as
\begin{equation}\label{eq:Hmk}
H_{m}[f_k] = \sum_{l=1}^{L} |g_{l}| e^{j\angle{g_l}} \cdot A_{m}(\theta_l) \cdot e^{-j 2 \pi \tau_l f_k }.
\end{equation}
For an unbiased estimator, estimation variance is bounded by CRLB, which is the inverse of the $4L \times 4L$ Fisher information matrix (FIM) $\qF(\qpsi)$ \cite{fisher_log} defined as
\begin{equation}\label{eq:fisher_ele}
[\qF(\qpsi)]_{u, v} = \mathbb{E}_{\qy|\qpsi}{\left\{\frac{\partial\ln p(\qy|\qpsi)}{\partial\psi_u}\frac{\partial\ln p(\qy|\qpsi)}{\partial\psi_v}\right\}},
\end{equation}
where $p(\qy|\qpsi)$ is the likelihood function of $\qy$ conditioned on $\qpsi$, and
expectation is taken over noise distribution. As shown in \eqref{eq:vec_y}, $p(\qy|\qpsi)$  follows a normal distraction with variance $\sigma^2_{z}$ and mean $\qh$.
As a result, \eqref{eq:fisher_ele} can be written as
\begin{equation}\label{eq:fisher_ele_sim}
[\qF(\qpsi)]_{u, v} = \frac{2}{\sigma^2_{z}}\Re\left\{\sum_{m=1}^{M_s}\sum_{k=1}^{N}\frac{\partial H^*_{m}[f_k]}{\partial\psi_u}\frac{\partial H_{m}[f_k]}{\partial\psi_v}\right\}.
\end{equation}
Let $\hat{\qpsi}$ be the unbiased estimator of $\qpsi$. Then, the MSE of entry $\psi_u$ is bounded as
\begin{equation}\label{eq:CRLB}
 \mathbb{E}_{\qy|\qpsi}\left\{ |\hat{\psi}_u-\psi_u|^2   \right\}
 \geq \left[\qF^{-1}(\qpsi) \right]_{u,u},
\end{equation}
where $\qF(\qpsi)$ should be nonsingular.

The $4L\times 4L$ FIM can be sliced into $L^2$ submatrices as
\begin{equation}
\qF(\qpsi) = \begin{bmatrix}
\qF(\qpsi_1, \qpsi_1)& \cdots & \qF(\qpsi_1, \qpsi_L)\\
\vdots& \ddots & \vdots\\
\qF(\qpsi_L, \qpsi_1)& \cdots & \qF(\qpsi_L, \qpsi_L)
\end{bmatrix},
\end{equation}
where ${\qF(\qpsi_l, \qpsi_k) =  \Re\{ ({\partial H^*_{m}[f_k]}/{\partial \qpsi_l}) ({\partial H_{m}[f_k]}/{\partial \qpsi_k}) \} }$ is a $4 \times 4$ matrix.
The rank deficiency of $\qF(\qpsi)$ arises if two or more paths have an extremely close delay and angle, which cause the determinant of $\qF(\qpsi)$ to approach $0$. Fortunately, the number of paths to the receiver is very small in mmWave systems because of reflection difficulty in a high-frequency band. Moreover, paths are characterized by separation in delay because of a large bandwidth \cite{path}. Even if delays and angles are very close in a few paths, these similar paths can be replaced with a single path whose amplitude is the sum of the amplitudes of these paths.
On the basis of this argument, we assume that $\qF(\qpsi)$ is nonsingular.
Thus, $\qF(\qpsi)$ is transformed into a block diagonal matrix. Each submatrix on the diagonal of $\qF(\qpsi)$ can be written as
\begin{equation}\label{eq:singlepath_F}
\qF(\qpsi_l, \qpsi_{l})
=\left[
\begin{array}{cccc}
\qF(|g_l|, |g_{l}|) & \qF(|g_l|, \angle g_{l}) & \qF(|g_l|, \theta_{l}) & \qF(|g_l|, \tau_{l})\\
\qF(\angle g_l, |g_{l}|) & \qF(\angle g_l, \angle g_{l}) & \qF(\angle g_l, \theta_{l}) & \qF(\angle g_l, \tau_{l})\\
\qF(\theta_l, |g_{l}|) & \qF(\theta_l, \angle g_{l}) & \qF(\theta_l, \theta_{l}) & \qF(\theta_l, \tau_{l})\\
\qF(\tau_l, |g_{l}|) & \qF(\tau_l, \angle g_{l}) & \qF(\tau_l, \theta_{l}) & \qF(\tau_l, \tau_{l})
\end{array}\right].
\end{equation}
Considering that the inverse of a block diagonal matrix is a block diagonal matrix with the inverse of the original blocks on its diagonal, we can separately calculate the inverse matrix of each submatrix.
Therefore, we focus on FIM in the single-path scenario, which is denoted as $\qF(\qpsi)$ with $\qpsi = [|g|, \angle{g}, \theta, \tau]^T$ in the following description. The entries of $\qF(\qpsi)$ are derived in Appendix A.

The CRLB provides the lower bound on the variance of the estimated parameters. We are also interested in the error of the reconstructed CSI defined in \eqref{eq:Hmk}. The MSE of the estimated CSI $\hat{H}_{m}[f_k]$ of the actual channel $H_{m}[f_k]$ is defined as
\begin{equation}
\mathrm{MSE}_m(\qpsi)[f_k] = \mathbb{E}\{|H_{m}[f_k]-\hat{H}_{m}[f_k]|^2\},
\end{equation}
where the expectation is taken over the noise for the deterministic $\qpsi$. The FIM of the CSI can be determined through a transformation of variables from $\qpsi$ to $H_{m}[f_k]$.
That is, the lower bound of the CSI is obtained via the transformation vector ${\partial H_{m}[f_k]}/{\partial \qpsi}$ as
\begin{equation}\label{eq:LB}
\mathrm{LB}_m(\qpsi)[f_k] = \left(\frac{\partial H_{m}[f_k]}{\partial \qpsi}\right)^H \qF^{-1}(\qpsi)\frac{\partial H_{m}[f_k]}{\partial \qpsi},
\end{equation}
where the entries of ${\partial H_{m}[f_k]}/{\partial \qpsi}$ are given by
\begin{subequations} \label{eq:dev_F}
\begin{align}
\frac{d H_{m}[f_k]}{d |g|} &= e^{j\alpha}  A_{m}(\theta),
&&\frac{d H_{m}[f_k]}{d \angle g} = |g| je^{j\alpha} A_{m}(\theta),\\
\frac{d H_{m}[f_k]}{d \theta} &= |g| e^{j\alpha} {A}_{m}'(\theta),
&&\frac{d H_{m}[f_k]}{d \tau} =|g| e^{j\alpha}  A_{m}(\theta) \left(-j 2 \pi f_k \right),
\end{align}
\end{subequations}
with $ \alpha = \angle{g} -2 \pi \tau f_k$ and ${A}_{m}'(\theta) = {d A_{m}(\theta)}/{d\theta}$. After these partial derivatives are inserted for a specific beamforming gain $A_{m}(\theta)$, the CSI error \eqref{eq:LB} can be constructed.

\subsection{Selection of an Analog Beam Codebook}

For analog-based beamforming technology, a commonly used codebook in \eqref{eq:Am} is in the form of
\begin{equation}\label{eq:w_m}
w_m(n) = \frac{1}{\sqrt{N}}e^{j\pi n\cos\phi_m}, ~n=0, \cdots, N-1,
\end{equation}
where $\phi_m$ represents the rotation angle of the $m$-th beam. For example, if an array antenna with a uniform separation of a half wavelength ($N = 4$) is assumed and $\phi_m = 90^\circ$, then the beam pattern $A_m(\theta)$ is obtained by substituting $\theta=0$ to $\pi$ in \eqref{eq:Am}. We derive the corresponding beam pattern in a Cartesian plot (Figure \ref{fig:sinc}).
Here, we adopt the codebook in \eqref{eq:w_m} for ease of understanding. The principle obtained below can be extended to arbitrary codebooks.

\begin{figure}
    \begin{center}
        \resizebox{2.5in}{!}{%
            \includegraphics*{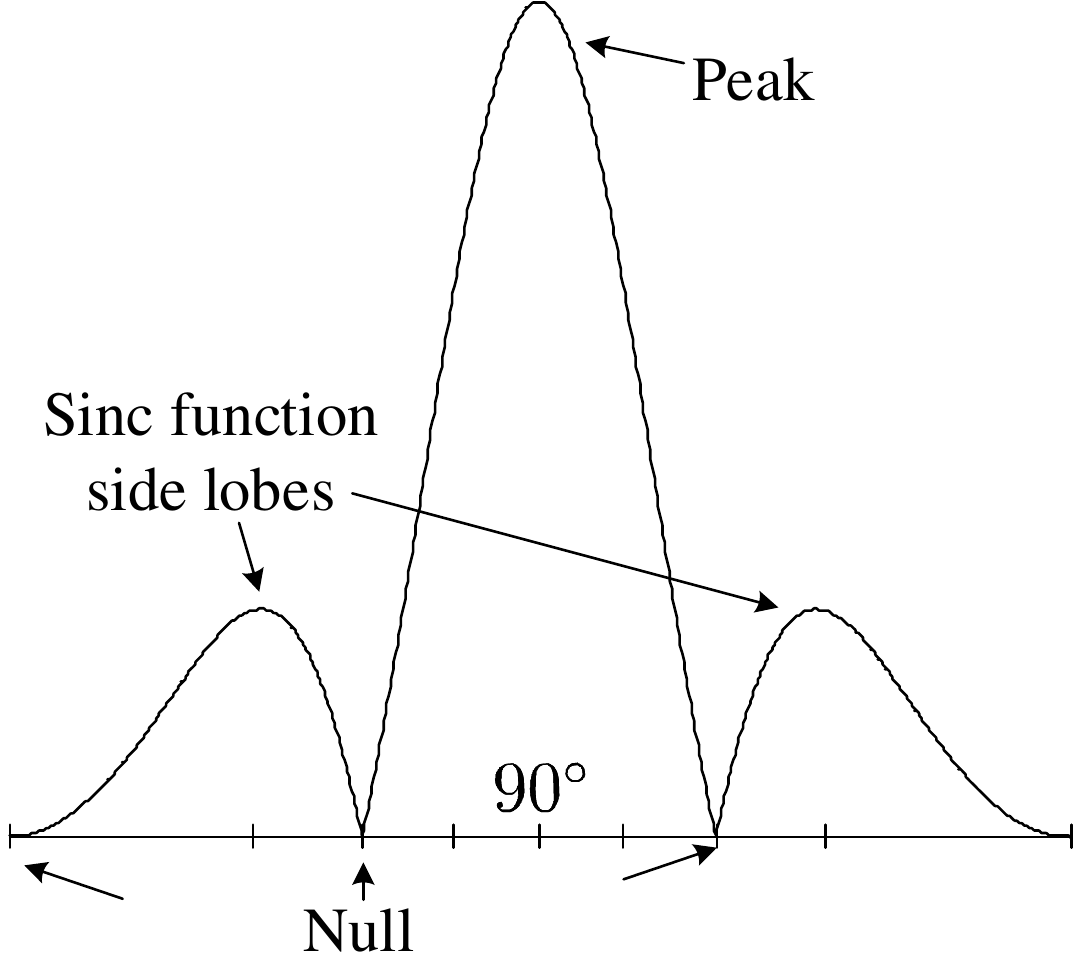} }%
        \caption{Sinc function of an analog beam.}\label{fig:sinc}
    \end{center}
\end{figure}

\begin{figure}
    \begin{center}
        \resizebox{3.5in}{!}{%
            \includegraphics*{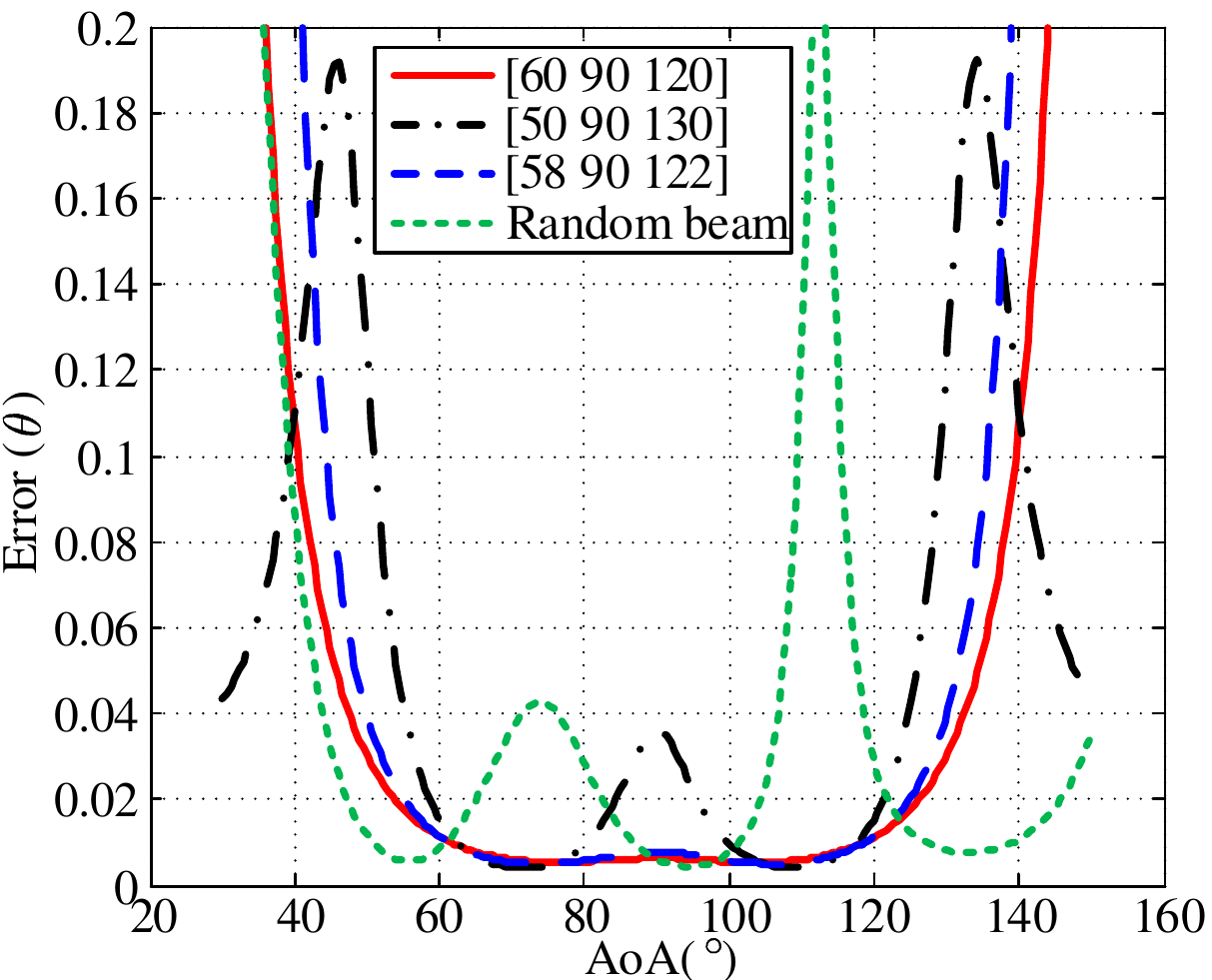} }%
        \caption{CRLB of different beam codebooks.}\label{fig:CRLB_theta}
    \end{center}
\end{figure}

The NOMP algorithm estimates the three tuples of a path through beam-specific CSI measurements.
Beam directions can be determined by partitioning a search space into many fine-grained beams.
However, doing so causes unacceptable latency for analog-based systems that need to perform beam sweeping.
Therefore, a proper set of $\phi_m$ for beam sweeping should be determined so that the three tuples of a channel can be obtained using a small number of received beams. Figure \ref{fig:sinc} shows that the maximal information regarding a channel can be obtained at $\theta = 90^\circ$. Moreover, the beam still provides useful information about $\theta$ except for the null points at $\theta = 60^\circ$, $120^\circ$ ,and $180^\circ$. Therefore, angles should be probed at the null points to obtain the full angular observation. In this way, the individual peaks of a beam pattern line up with the nulls of the other beam pattern to achieve no interference between beam patterns.
Therefore, a channel with few received beams can be efficiently estimated by using a beam codebook in which beams do not interfere with one another. This characteristic is referred to as the \emph{orthogonal beam principle}.
In fact, this principle is close to the spectrum characteristic of an OFDM system, where orthogonally spaced subcarriers provide  the most efficient way to utilize frequency bands without creating a subcarrier interference.

The CRLB derived in the previous subsection is exploited to show the quality of the beam codebook.  From \eqref{eq:defpsi} and \eqref{eq:CRLB}, the CRLB of $\theta$ is given by $\sigma^2_{\theta} \triangleq \left[\qF^{-1}(\qpsi) \right]_{3,3}$, whose details are shown in following proposition.
\begin{Proposition}
    The CRLB of $\theta$ can be simplified as
    \begin{equation} \label{eq:CRLB_theta}
    \sigma^2_{\theta} = \frac{\sigma^2_{z}A}{2N_s|g|^2(AD-\Re\{B C\})},
    \end{equation}
    where parameters $A$, $B$, $C$, and $D$ are defined by \eqref{eq:AtoD} in Appendix A.
\end{Proposition}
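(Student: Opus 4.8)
The plan is to form the single-path Fisher information matrix $\qF(\qpsi)$ explicitly for $\qpsi=[\,|g|,\angle g,\theta,\tau\,]^T$ using \eqref{eq:fisher_ele_sim} and \eqref{eq:dev_F}, and then to read off $\sigma_\theta^2=[\qF^{-1}(\qpsi)]_{3,3}$ via \eqref{eq:CRLB}. The key structural observation is that all four derivatives in \eqref{eq:dev_F} carry the common phase factor $e^{j\alpha}$ with $\alpha=\angle g-2\pi\tau f_k$, so in each product $\frac{\partial H^*_{m}[f_k]}{\partial\psi_u}\frac{\partial H_{m}[f_k]}{\partial\psi_v}$ entering \eqref{eq:fisher_ele_sim} this phase cancels exactly. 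Consequently every entry of $\qF(\qpsi)$ factorizes into a sum over beams $m$ of products of $A_m(\theta)$ and $A_m'(\theta)$, times a sum over subcarriers $k$ of $1$, $f_k$, or $f_k^2$ (the latter two originating only from the $\tau$-derivative). First I would tabulate the entries of the $4\times4$ matrix in \eqref{eq:singlepath_F} in this factored form; this is precisely what produces the beam-pattern quantities $A$, $B$, $C$, $D$ of \eqref{eq:AtoD}.

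The next step is to exploit this structure to shrink the problem. Since $A_m(\theta)$ and $|A_m(\theta)|^2$ are independent of $k$, the subcarrier sum $\sum_k 1=N_s$ simply scales the relevant blocks. Taking $\Re\{\cdot\}$ then annihilates the $(|g|,\angle g)$ and $(|g|,\tau)$ entries, because those products are purely imaginary. Under the natural DC-centered subcarrier convention $\sum_k f_k=0$, the remaining $\tau$-cross terms $[\qF]_{24}$ and $[\qF]_{34}$ vanish as well, so $\tau$ decouples and $\qF(\qpsi)$ becomes block diagonal with a $3\times3$ block in $(|g|,\angle g,\theta)$ and a scalar block in $\tau$. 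This $3\times3$ block is real symmetric with a zero $(1,2)$ entry, its diagonal proportional to $(A,\,|g|^2A,\,|g|^2D)$, and its $\theta$-couplings proportional to $\Re\{B\}$ (to $|g|$) and $\Im\{B\}$ (to $\angle g$), all sharing the factor $2N_s/\sigma_z^2$.

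Finally, because $\tau$ decouples, $[\qF^{-1}(\qpsi)]_{3,3}$ depends only on the $3\times3$ block and equals its $(3,3)$ cofactor ratio, i.e. the upper-left $2\times2$ minor divided by the block determinant. The minor is proportional to $A\cdot|g|^2A$, while expanding the determinant yields the combination $AD-(\Re\{B\})^2-(\Im\{B\})^2$; identifying $C=\overline{B}$ then lets me rewrite $(\Re\{B\})^2+(\Im\{B\})^2=|B|^2=\Re\{BC\}$, which is the algebraic heart of the simplification. Collecting the scalar factor $2N_s/\sigma_z^2$ delivers \eqref{eq:CRLB_theta}. I expect the main obstacle to be the careful bookkeeping of the real-part operation—pinning down exactly which entries vanish and verifying that the two $\theta$-couplings split cleanly into $\Re\{B\}$ and $\Im\{B\}$ so that they recombine into $\Re\{BC\}$—together with justifying the $\sum_k f_k=0$ convention that decouples $\tau$; once these are settled, the inversion is a routine $3\times3$ (indeed effectively $2\times2$) calculation.
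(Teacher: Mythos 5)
Your final formula and the $3\times3$ inversion behind it are correct, and your route is genuinely different from the paper's. The paper does not re-derive or re-invert anything in the proof of this proposition: it quotes the closed-form entry $\sigma^2_\theta = [\qF^{-1}(\qpsi)]_{3,3}$ already computed in Appendix A for the \emph{full} $4\times4$ FIM (equation \eqref{eq:sig_theta}, with the determinant-like quantity $J$ of \eqref{eq:J}), observes that $\Im\{A\}=\Im\{D\}=0$ since $A$ and $D$ are sums of squared magnitudes, and collapses $J$ to $A(AD-\Re\{BC\})$ via $\Re\{BC\}=\Re\{B\}\Re\{C\}-\Im\{B\}\Im\{C\}$. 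You instead rebuild the FIM from \eqref{eq:dev_F}, kill the $\tau$-row and column, and invert only the $(|g|,\angle g,\theta)$ block; your identification $C=\overline{B}$ (immediate from \eqref{eq:AtoD}) giving $|B|^2=\Re\{BC\}$ is a clean substitute for the paper's real-part identity. Your approach buys a much smaller inversion and makes the structure of the answer transparent; the paper's approach buys generality, since its Appendix A inverse is valid without any assumption on the subcarrier placement and also yields the CRLBs of $|g|$, $\angle g$, $\tau$.

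The gap is the step you yourself flag: the assumption $\sum_k f_k=0$. This is not the paper's setting. The paper's $\qP$ matrix in Appendix A has nonzero $(2,4)$ and $(3,4)$ entries proportional to $(N_s-1)$, which is exactly the statement that $\sum_k f_k\neq 0$ under its subcarrier indexing; so in the paper's convention $[\qF]_{24}=-c\,|g|^2\,2\pi(\sum_k f_k)A\neq 0$ and $\tau$ does \emph{not} decouple, and your block-diagonal reduction fails as stated. The fix is a reparametrization-invariance argument you did not supply: shifting every $f_k$ by the mean $\bar f$ rewrites the model as $|g|e^{j(\angle g-2\pi\bar f\tau)}A_m(\theta)e^{-j2\pi\tau(f_k-\bar f)}$, i.e. it is equivalent to the smooth invertible change of nuisance parameters $\angle g\mapsto \angle g-2\pi\bar f\tau$ with $|g|$, $\theta$, $\tau$ fixed. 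Since the $\theta$-row of the Jacobian of this map is the unit vector, $[\qF^{-1}]_{3,3}$ is unchanged, so the centered convention may be adopted without loss of generality. With that one paragraph added, your proof is complete and agrees with \eqref{eq:CRLB_theta}; without it, the derivation proves the proposition only for a frequency grid the paper does not use.
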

\begin{proof}
    The CRLB of $\theta$ can be calculated from \eqref{eq:sig_theta} in Appendix A. We can infer from \eqref{eq:AtoD} that $\Im\{A\} = 0$ and $\Im\{D\} = 0$. Therefore, the parameter $J$ which is defined in \eqref{eq:J} can be simplified as
    \begin{align}
    J &= A^2D+\Im\{B\}\Im\{C\}A-A\Re\{B\}\Re\{C\}\notag \\
    &=A(AD-\Re\{BC\}).\label{eq:J_sim}
    \end{align}
    Substitute \eqref{eq:J_sim} into \eqref{eq:sig_theta}, the simplification can be obtained as \eqref{eq:CRLB_theta}.
\end{proof}

Our focus is on the relationship of the beam codebook with the CRLB of $\theta$, so the above equation is simplified to
\begin{equation}\label{eq:CRLB_theta_sim}
\sigma^2_{\theta} \varpropto \frac{A}{AD-\Re\{B C\}} \triangleq {\rm Error}(\theta).
\end{equation}
Thus, \eqref{eq:CRLB_theta_sim} is used to examine the quality of the beam codebook. For example, an antenna module has four antenna elements with a uniform separation of a half wavelength, and the system only allows three beams to receive signals. In accordance with the orthogonal beam principle, the three-beam codebook should be $\phi_m = \{ 60^\circ, 90^\circ, 120^\circ \}$. Figure \ref{fig:CRLB_theta} shows that the angular estimation error of the codebook found on the basis of the orthogonal beam principle is lower than those of the codebook composed of angles $\{ 50^\circ, 90^\circ, 130^\circ \}$ and a random codebook over the whole AoA range. Figure \ref{fig:CRLB_theta} illustrates that adding a reasonable error value on the angular error, i.e., the case of $\{ 58^\circ, 90^\circ, 122^\circ \}$, barely affects the estimation error.
Plugging different numbers and sets of $\phi_m$ for beam sweeping into \eqref{eq:CRLB_theta_sim} can obtain the lower bound of the estimated $\theta$. Hence, we can use \eqref{eq:CRLB_theta_sim} to directly find the effective and non-wasteful codebook set for beam sweeping.

\subsection{Error of Virtual CSI}

In \eqref{eq:LB}, the CSI error for the  power one antenna module is shown. Now, the virtual CSI error for other antenna modules is considered. From \eqref{eq:Hm^vir_simple}, the virtual CSI of $q$ can be reconstructed as $\ddot{H}_{m}^{[q]}[f_k]$ by using the three-tuple parameters estimated from $p$.  The major difference between $p$ and $q$ is in $A_{m}(\dot{\theta}_{s,l}^{[q]})$, which is only related to the AoA and the rotation angle of the analog beam. The lower bound of the MSE of the virtual CSI at $q$ can be set as $\mathrm{LB}_{m}^{[q]}(\qpsi)[f_k]$ by using $\ddot{H}_{m}^{[q]}[f_k]$ in accordance with \eqref{eq:LB}. The AoA of the $l$-th path and the rotation angle of the analog beam at $p$ and $q$ are written as $\{\theta_{l}^{[p]}, \phi_m^{[p]}\}$ and $\{\theta_{l}^{[q]}, \phi_m^{[q]}\}$. The angle of different antenna modules is only \emph{definitionally} different. The relationship between $\theta_{l}^{[p]}$ and $\theta_{l}^{[q]}$ can be determined explicitly by using \eqref{eq:thetal+delta} through the angular rotation $\Delta \theta^{[p,q]}$.
In addition, the relationship between $\phi_m^{[p]}$ and $\phi_m^{[q]}$ can also be written as
\begin{equation}\label{eq:phi}
\phi_m^{[q]} = \phi_m^{[p]} + \Delta \theta^{[p,q]}.
\end{equation}
Consequently, we obtain
\begin{equation}
\mathrm{LB}_{m}^{[q]}(\qpsi)[f_k] = \mathrm{LB}_{m}(\qpsi)[f_k].
\end{equation}
That is, the CSI of the powered antenna module can be utilized to determine the virtual CSI error of the other antenna modules.
Therefore, the result proves that using the virtual channel for switching in Fast-ABS can achieve the same performance as that obtained with the original powered antenna module.

\section{Simulations and Implementation}

We conduct simulations and implementation to verify the efficiency of Fast-ABS. We first apply simulations to analyze the performance of Fast-ABS. Next, we discuss how to implement Fast-ABS on software-defined radio and evaluate its performance via over-the-air (OTA) tests. Regardless of simulations or experiments, we consider an antenna module of UE has four antenna elements with a uniform separation of a half wavelength. We also consider that the observable AoAs of each module are between $30$ and $150$ degrees because of the performance limitation of the antenna array at the beam edges, as observed through our measurement. The AoAs of an antenna module outside the observable range belong to the observable AoAs of the other antenna module.
The codebook received by Fast-ABS has only nine grids on the azimuth angle; that is, $\phi_m$ in \eqref{eq:w_m} is only available at $[30^{\circ}, 150^{\circ}]$ for every $15^\circ$. To obtain the performance benchmark, we perform the ES of the azimuth angle with ${M_{\rm ES} = 481}$ grids in the angular span of $[30^{\circ}, 150^{\circ}]$ with $0.25^\circ$ spacing per partition. With these segmentation angles, we can obtain the corresponding RSNR of each angle, and we set the angle corresponding to the peak value as the resulting angle of the oracle.

\subsection{Simulations}
We utilize four-, three-, and two-beam patterns to obtain beam-specific CSI measurements, that is, $M_{\rm ABS} = 2, 3, 4$.
The angular directions of the four-, three-, and two-beam codebooks are respectively centralized at $\{45^\circ, 75^\circ, 105^\circ, 135^\circ\}$, $\{60^\circ, 90^\circ, 120^\circ\}$, and $\{60^\circ, 120^\circ\}$. Beam angles in a codebook do not affect one another because of the orthogonal beam principle. In accordance with \eqref{eq:CRLB_theta}, we know that the lower bound of the estimated angle is related to SNR and the number of pilots of CSI-RS $N_s$, given a fixing codebook. Figure \ref{fig:LB_SNR} shows the estimation errors of the four-, three-, and two-beam codebooks.
In addition, we compared the two possible pilot lengths, that is, $N_s = 300$ and $N_s = 825$. We see that the estimation accuracies improve to a greater extent by increasing beam angles in a codebook than by increasing pilot lengths.

\begin{figure}
    \begin{center}
        \resizebox{3.50in}{!}{%
            \includegraphics*{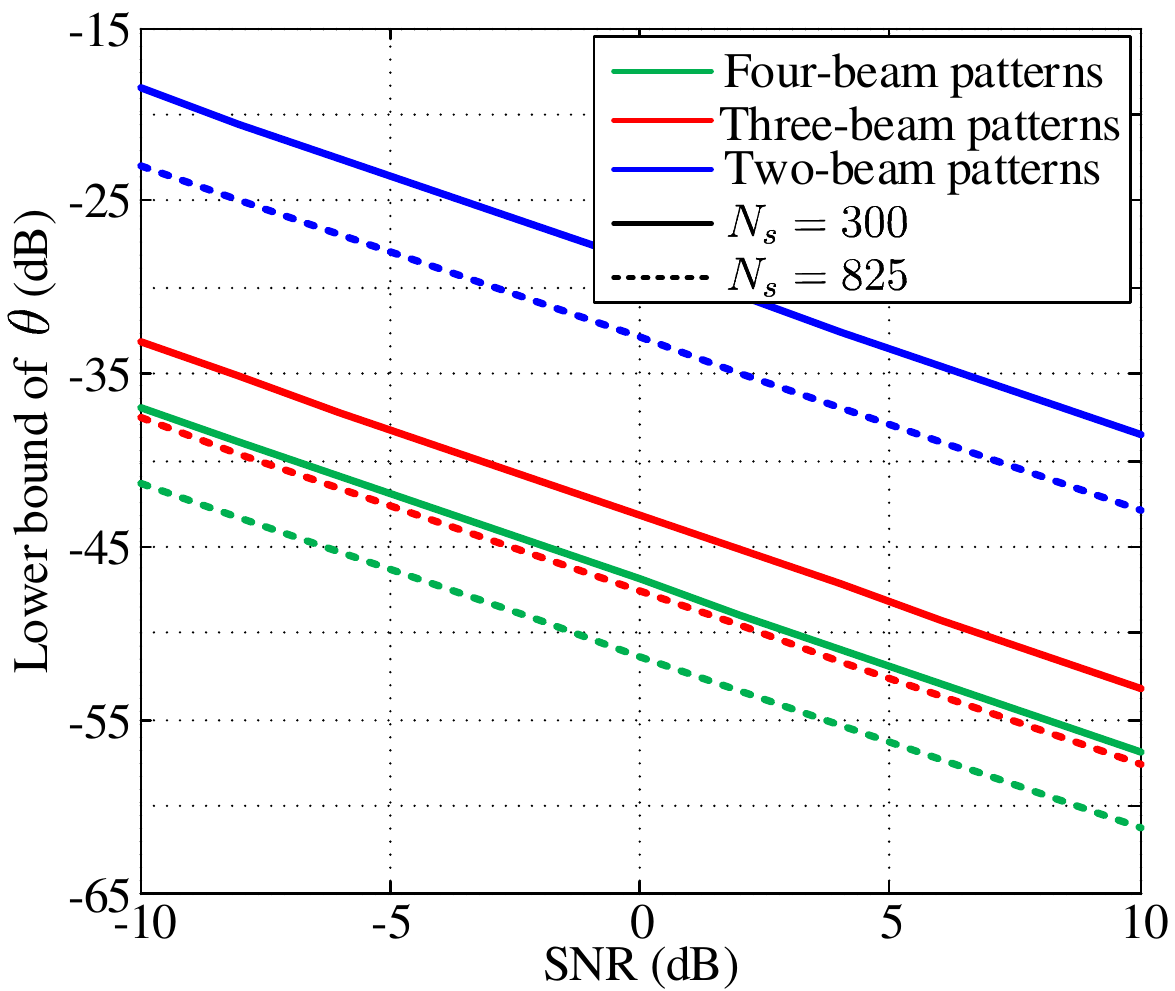} }%
        \caption{CDFs of Fast-ABS's RSNRs w.r.t. ES directly performed on module 2.}\label{fig:LB_SNR}
    \end{center}
\end{figure}

To clearly understand the performance of the considered algorithms with the used codebooks, we compare the performance of MSE by NOMP, OMP, and ES and the lower bound of the estimated angle under a fixed $N_s= 300$ and ${\rm SNR} = 10$\,dB. We consider two-path scenarios: one is the line-of-sight (LoS) path, and the other reflection path is 10 dB (case I) and 20 dB (case II) weaker than the LoS path because of reflection loss.
The phase of the complex channel gain is uniformly over $0$ and $2\pi$. ToAs are generated randomly from a distance of $0$\,m to $60$\,m. The LoS path distance is smaller than the NLoS path distance.
Figure \ref{fig:MSE_theta}(a) compares the performance of the dominant path by different algorithms when the four-beam pattern is used under case I. We can observe that the stable result of NOMP is superior to the estimation result of OMP because of granularity. Moreover, the estimation result of NOMP by using four-beam patterns is better than that of ES. This result suggests that NOMP exhibits robustness to multipath and produces a high resolution that exceeds $0.25^\circ$ and significantly reduces latency by nearly 120 times.
Figure \ref{fig:MSE_theta}(b) compares the performance of the second path by different algorithms when the four-beam pattern is used under different cases. The figure shows that OMP cannot estimate the angle of the second path and is thus unavailable for establishing virtual channels. In addition, the performance in case I is better than that in case II. The reason is obviously related to the power of the reflection path.
Figure \ref{fig:MSE_theta}(c) compares the performance of the dominant path by NOMP when using different beam patterns under different cases. The result of Figure \ref{fig:MSE_theta}(c) shows that the performance under case II is better than that under case I. This result is reasonable because we can regard case II as a single-path channel as the path difference reaches 20 dB. This result also shows that the trend estimated by NOMP is the same as that estimated by CRLB. Notably, the CRLB of $\theta$ is associated with single-path scenarios.

\begin{figure*}
   \begin{center}
       \resizebox{6.95in}{!}{%
           \includegraphics*{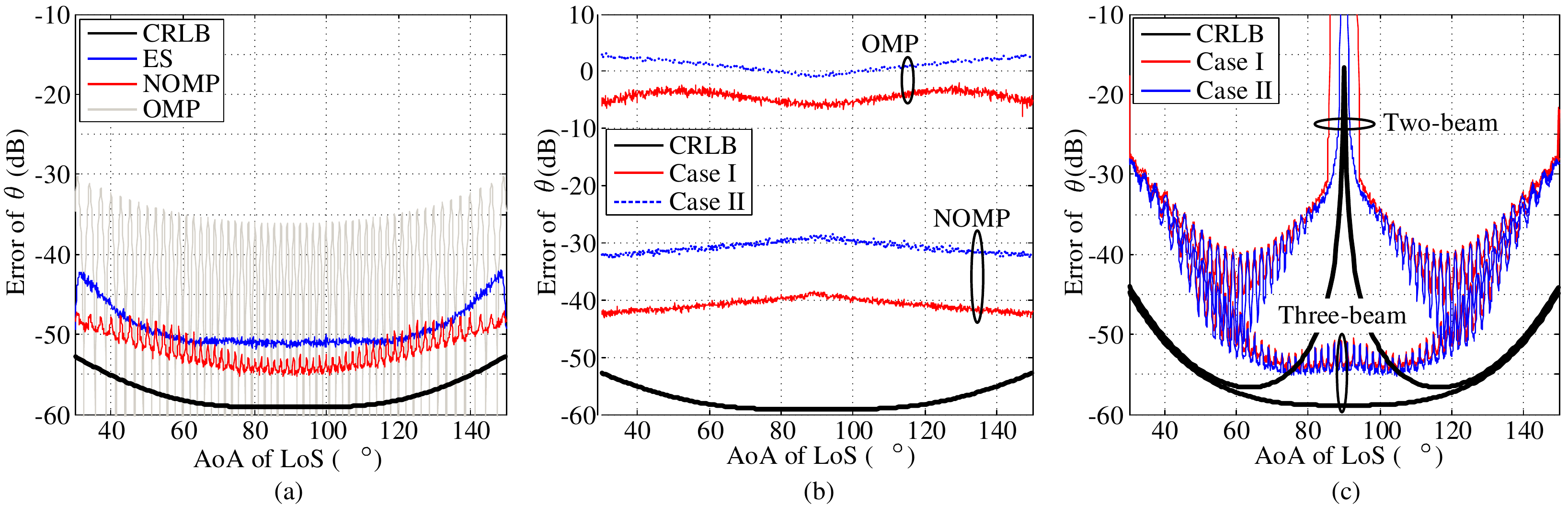} }%
       \caption{(a) Error of the dominant path angle with different algorithms. (b) Error of the second path angle with different algorithms in different case. (c) Error of the dominant path angle with different beam patterns in different case. }\label{fig:MSE_theta}
   \end{center}
\end{figure*}

The RSNR calculated after beamforming is our final performance target. Therefore, the CDF of the RSNR of the concerned methods is presented in Figure \ref{fig:NOMP_RSNR}. The parameter of AoA is generated in a random uniform distribution between $[30^{\circ}, 150^{\circ}]$. Notably, the received beam to calculate RSNR is the closest direction which is from nine grid angles, not the direction that matches the strongest path. We find that NOMP with only three- to four-beam patterns shows a similar RSNR that is optimal in all scenarios. However, the performance of the two-beam pattern is unproductive. Figure \ref{fig:MSE_theta} illustrates a range of angles, i.e., $(85^{\circ}, 95^{\circ})$, which results in an unacceptable RSNR (Figure \ref{fig:NOMP_RSNR}) for $8\% $ of cases.

\begin{figure}
   \begin{center}
       \resizebox{3.5in}{!}{%
           \includegraphics*{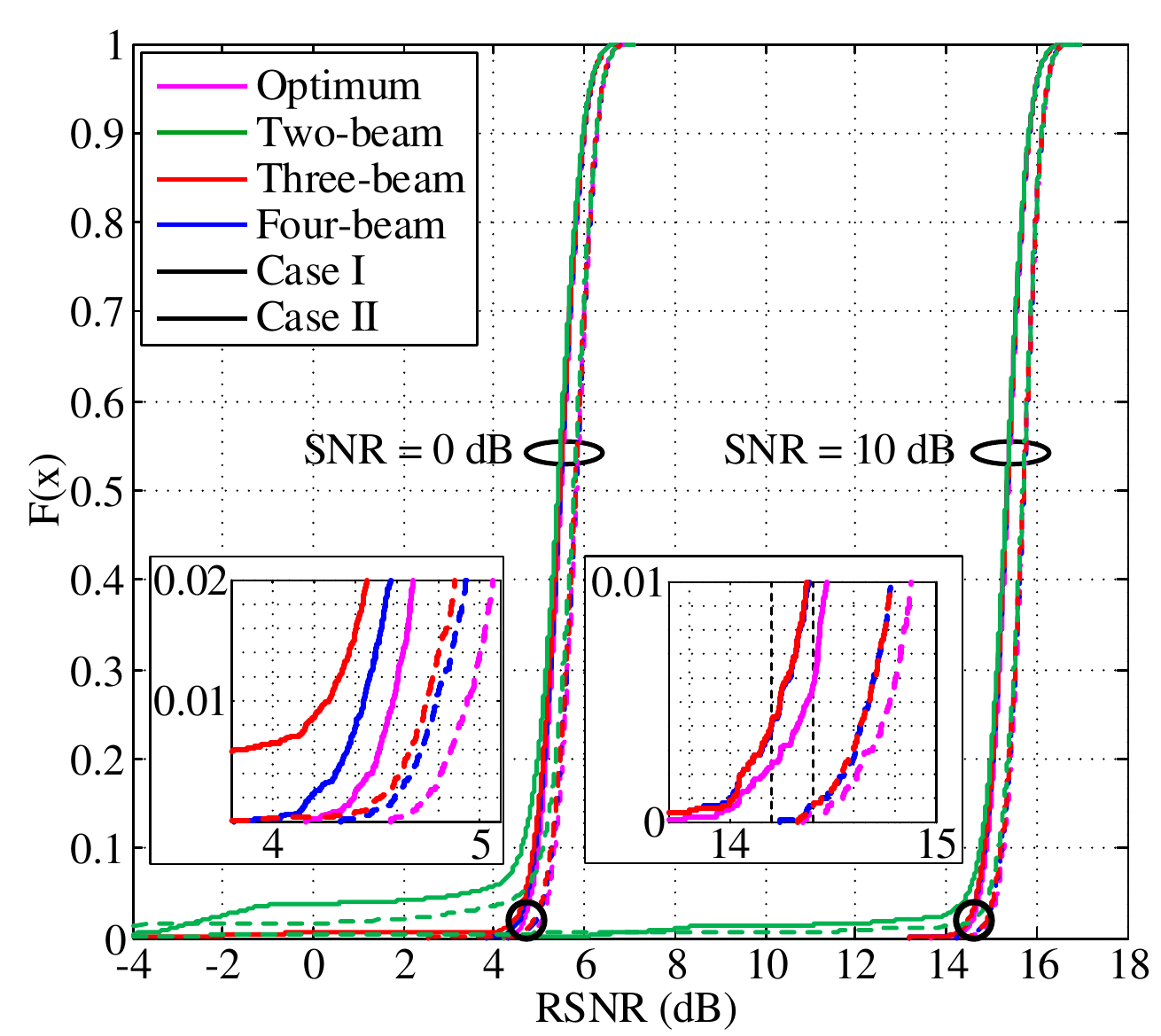} }%
       \caption{CDFs of NOMP's RSNRs w.r.t. the optimum for cases 1 and 2.}\label{fig:NOMP_RSNR}
   \end{center}
\end{figure}

In Section III.C,  the relative propagation delay between two arrays is unrelated to analog beam selection.
To verify this claim, we conduct the same simulation as in case 1 for antenna module 1, while we generate the AoAs and ToAs of antenna module 2 by using \eqref{eq:thetal+taul+rho}
with the propagation delay given by \eqref{eq:delta_tau}.
We consider that the two antenna modules are placed over the long/left and top/short edges, as shown in Figure \ref{fig:Intuition}(d).
Antenna module 1 is the current receiving antenna, while the channel and the beam of antenna module 2 are predicted by Fast-ABS.
The optimum RSNRs of module 2 are achieved by directly performing result on module 2, and the results serve as the benchmark.
Figure \ref{fig:twoarray_RSNR} illustrates that the RSNRs of Fast-ABS are the same as those of optimum.
Although propagation delays between antenna modules 1 and 2 are ignored,
the beam selected by Fast-ABS coincides entirely with the optimum across the test scenarios.
Consequently, we can use a (simple) virtual channel given by \eqref{eq:Hm^vir_simple} for beam selection.

\begin{figure}
    \begin{center}
        \resizebox{3.50in}{!}{%
            \includegraphics*{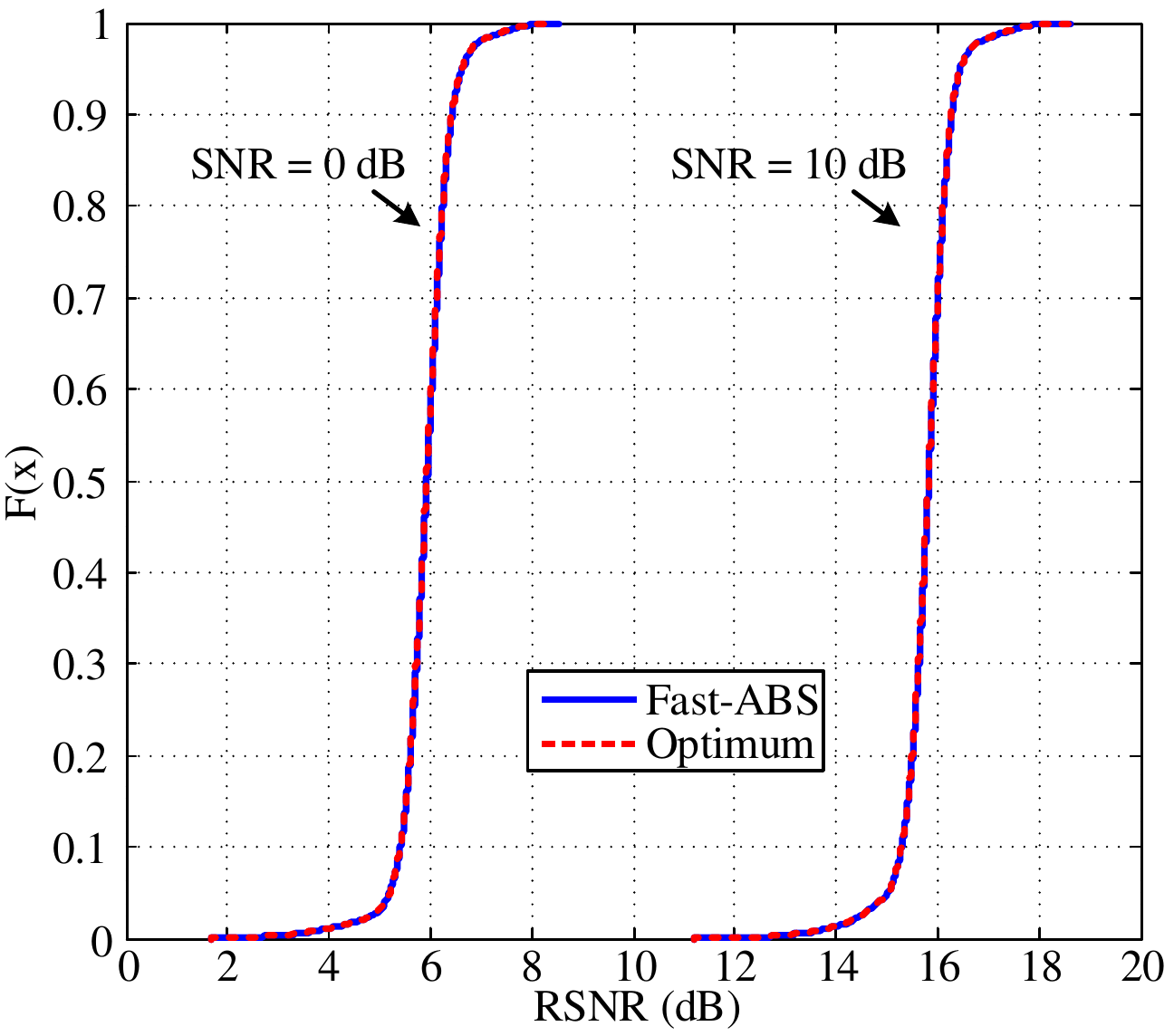} }%
        \caption{CDFs of Fast-ABS's RSNRs w.r.t. the optimum directly performed on module 2.}\label{fig:twoarray_RSNR}
    \end{center}
\end{figure}

\subsection{Implementation}
With the abovementioned simulations, we verify our proposals: 1) NOMP can be an efficient algorithm to extract the three tuples of paths through only three to four probings of beam-specific CSI measurements, and 2) a virtual channel is an effective CSI for beam selection. To verify the feasibility of Fast-ABS in practical scenarios, we propose our main rationale that the underlying physical signal paths traversed by each subarray should remain the same. That is, a fixed angular rotation occurs between two arrays on a mobile phone. Thus, we implement Fast-ABS on software defined radios and integrate it into a 5G NR physical layer.

The architecture of our testbed is shown in Figure \ref{fig:testbed_arch}. In the transmitter side, the baseband (BB) signal is first modulated to 4 GHz as an intermediate frequency (IF) signal. The BB signal is based on the 5G NR physical layer with a bandwidth of 100 MHz for a subcarrier spacing of 60 kHz. Next, an upconverter performs the quadruple operation of a 6 GHz carrier generated by a local oscillator (LO) with a frequency multiplier to obtain a 24 GHz carrier. Finally, a mixer is used to modulate a 4 GHz IF signal with a 24 GHz carrier to generate a 28 GHz radio frequency (RF) signal, and the transmit antenna is utilized to emanate the RF signal. The transmit antenna is either a horn antenna or an ${8 \times 8}$ phased array, which depends on the purpose of the following experiment. On the receiver side, the downconverter at the receiver also operates with the LO to obtain the 4 GHz IF signal back. An oscilloscope demodulates the signal to a BB signal. For our hardware, each receiver antenna element is equipped with an RF chain. We develop a calibration procedure to achieve the phase coherence of each subarray. Therefore, we can apply digital beamforming to simulate the equivalent result of the analog beamforming as in \eqref{eq:Am}.

\begin{figure*}
    \begin{center}
        \resizebox{5.30in}{!}{%
            \includegraphics*{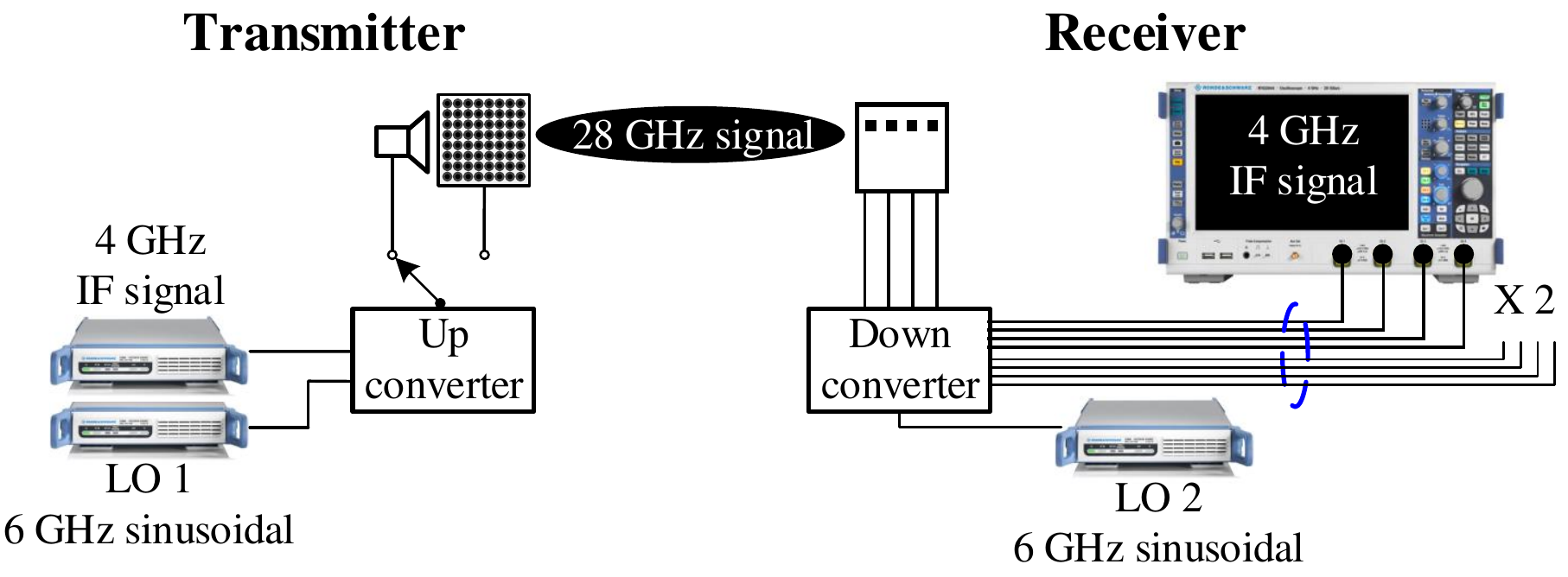} }%
        \caption{Architectural schematics of a mmWave testing platform.}\label{fig:testbed_arch}
    \end{center}
\end{figure*}

With our testbed, signals from the two antenna modules can be received \emph{simultaneously}. Thus, the testbed can be used to verify whether or not a fixed angular rotation occurs between two arrays. To clean the transmit beam, a horn antenna is used as a transmit antenna to verify this characteristic, with the distance between the transmitter and the receiver being 50\,cm, as shown in Figure \ref{fig:testbed_env}(b).
The receiver is equipped with a mobile phone consisting of two $4 \times 1$ dipole arrays placed at the two edges of the mobile phone named modules 1 and 2, as presented in Figure \ref{fig:testbed_env}(e). Figure \ref{fig:array_arch}(a) depicts the effective angular range of the two antenna modules. The schematic scenarios of LoS and NLoS are illustrated in Figure \ref{fig:array_arch}(b). In the LoS scenario, the BS is placed at the protractor, as displayed in Figure \ref{fig:testbed_env}(b), and it emits signals from $30^\circ$ to $60^\circ$ of module 2. We record every measurement result per $6^\circ$ change in position. In the NLoS scenario, the horn antenna does not directly transmit a signal to the receiving direction; instead, it utilizes an iron plate to simulate a reflection path. We rotate the iron plate to obtain different reflection paths. The measurement results are summarized in Table \ref{tab:ant_angle}; in particular, the AoAs of the two arrays are extracted simultaneously from the four-beams patterns ($M_{\rm ABS}=4$) via NOMP. The AoAs are selected from an antenna module with the strongest channel gain to represent the main AoA of the antenna module. The corresponding RSNRs are also listed in Table \ref{tab:ant_angle} to reflect the measurement quality received by that AoA direction. Approximately $90^\circ$ angular rotation is found between the two arrays in LoS and NLoS scenarios. Consequently, a fixed angular rotation is observed between the two antenna modules, and Fast-ABS is feasible.

\begin{figure*}
    \begin{center}
        \resizebox{6.0in}{!}{%
            \includegraphics*{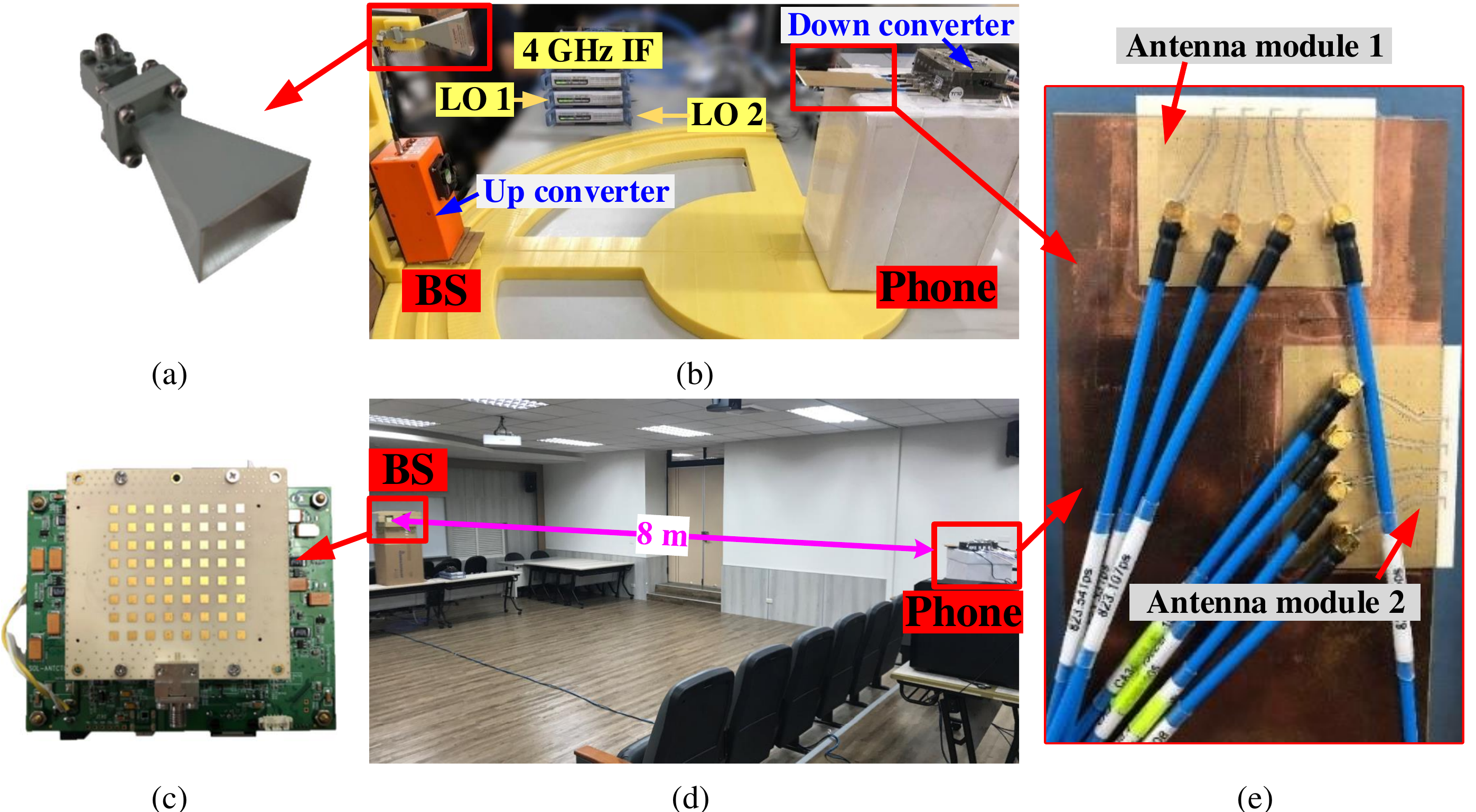} }%
        \caption{(a) Horn antenna.
        (b) Actual scenario of the mmWave testing platform with a horn antenna in the protractor.
        (c) An $8\times 8$ planar antenna array.
        (d) Long-range actual scenario for the mmWave testing platform with $8\times 8$ planar antenna array.
        (e) Two $4\times 1$ dipole arrays.}\label{fig:testbed_env}
    \end{center}
\end{figure*}

\begin{figure}
    \begin{center}
        \resizebox{3.80in}{!}{%
            \includegraphics*{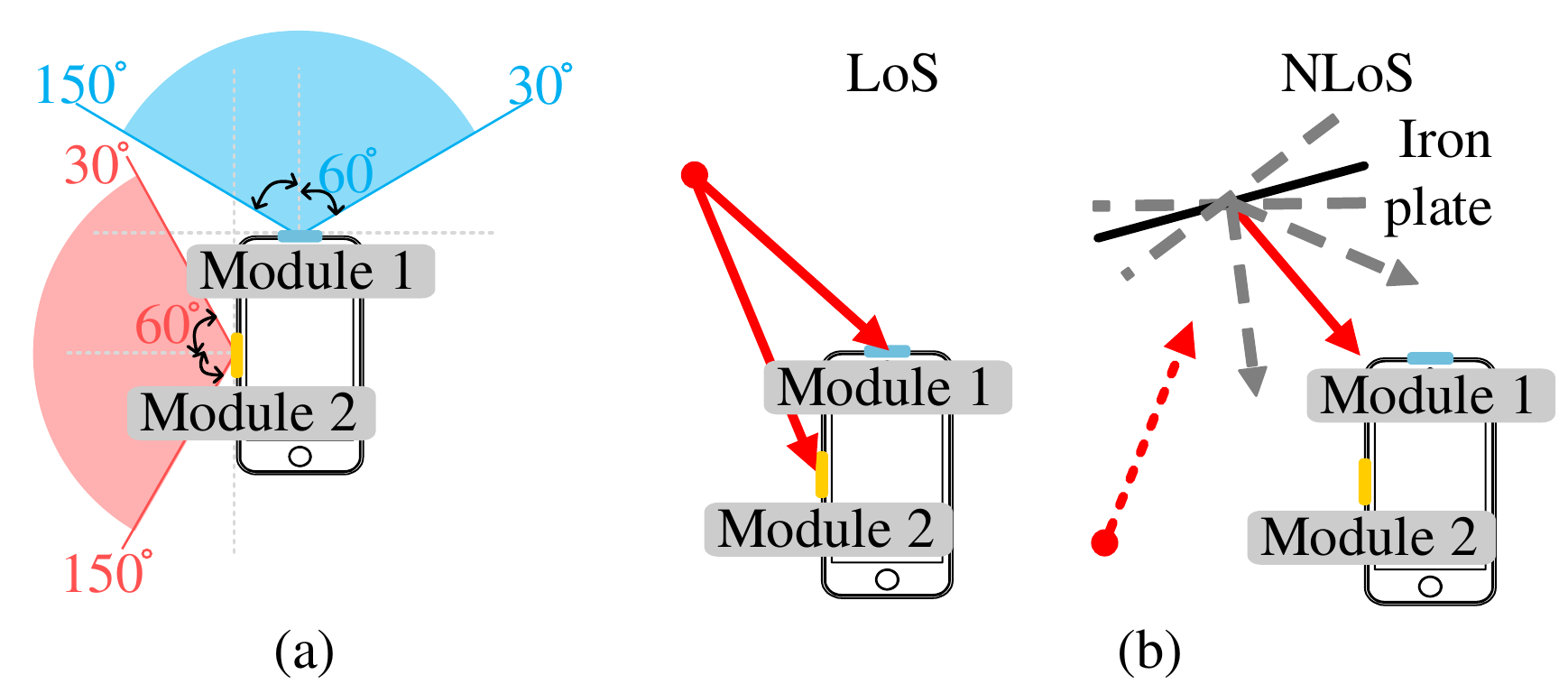} }%
        \caption{(a) Two $4\times 1$ dipole arrays. (b) Effective range of beam receiving. }\label{fig:array_arch}
    \end{center}
\end{figure}

\begin{table}
    \centering
    \begin{footnotesize}
        \caption{Two sets of subarray measurement results.}\label{tab:ant_angle}
        \begin{tabular}{cccccc}\toprule
            & \multicolumn{2}{c}{Antenna module 1} & \multicolumn{2}{c}{Antenna module 2} & \multirow{2}{*}{\begin{tabular}[c]{@{}c@{}}Difference \\ of AoA\end{tabular}} \\ \cmidrule{2-5}
            & AoA & RSNR & AoA & RSNR &  \\ \midrule
            \multirow{5}{*}{LoS} & 120.2$^\circ$ & 28.0 dB & 32.5$^\circ$ & 27.6 dB & 88$^\circ$ \\
            & 128.5$^\circ$ & 28.8 dB & 41.9$^\circ$ & 28.5 dB & 87$^\circ$ \\
            & 132.0$^\circ$ & 28.2 dB & 43.2$^\circ$ & 28.7 dB & 89$^\circ$ \\
            & 139.2$^\circ$ & 28.9 dB & 50.0$^\circ$ & 28.6 dB & 89$^\circ$ \\
            & 143.8$^\circ$ & 28.4 dB & 52.3$^\circ$ & 28.0 dB & 91$^\circ$ \\ \midrule
            \multirow{3}{*}{NLoS} & 130.5$^\circ$ & 11.8 dB & 41.1$^\circ$ & 18.5 dB & 89$^\circ$ \\
            & 134.2$^\circ$ & 14.9 dB & 45.0$^\circ$ & 18.1 dB & 89$^\circ$ \\
            & 139.0$^\circ$ & 16.0 dB & 50.8$^\circ$ & 24.2 dB & 88$^\circ$ \\ \bottomrule
        \end{tabular}
    \end{footnotesize}
\end{table}

Finally, Fast-ABS is tested under a long-range scenario in which the distance distance is 8\,m by using the $8 \times 8$ phased array, as shown in Figure \ref{fig:testbed_env}(d).
In Figure \ref{fig:testbed_env}(c), the phased array counts for eight patch antennas in horizontal and vertical directions lead to a total of $8 \times 8$ antenna elements. The codebook table of the phased array is built with $11$ beams in the horizontal and vertical directions.
The receiver also uses the mobile phone with two antenna modules as in Figure \ref{fig:testbed_env}(e).

The RSNR of Fast-ABS is cut with an approximated $200$\,s trajectory (Figure \ref{fig:Imple_result}). The results of two antenna modules can be recorded simultaneously, so the detailed performances are recorded in Table \ref{tab:Imple_result}. The ES scheme is taken for comparison. If the same antenna module is used in Fast-ABS and ES, then the RSNR performance achieved by ES is the benchmark. Our UE is equipped with a power detector on each antenna module so the UE knows the power information of all the antenna modules. Therefore, if the ES scheme is employed, the UE can decide to switch to another antenna module based on power information.
However, when doing so with the ES scheme, the UE cannot receive data immediately because the UE has to scan the entire receiver beams to establish a communication link.

Figure \ref{fig:Imple_result} illustrates the RSNR performances between Fast-ABS and ES under different scenarios, which are classified into five stages (stages I--V). 
The corresponding time complexities of each stage are summarized in Table \ref{tab:complex}.
At stage I, module 1 is utilized as the initial antenna module with the UE. The transmit beam direction is determined by the initial access. No significant RSNR difference is observed between Fast-ABS and ES at this stage. The RSNR performance of ES is slightly better than that of Fast-ABS because of the fine-grained beam directions of the former. At stage II, a power detector combined with a reconstructed channel is utilized in Fast-ABS to determine the better antenna module and the receiver beam direction. Since module 2 is better than module 1 in this scenario, Fast-ABS switches to module 2 at stage II. 
Table \ref{tab:complex} shows that Fast-ABS is better than ES in terms of switching latency because a small number of received beams are used to extract three tuples in Fast-ABS ($M_{\rm ABS}\ll M_{\rm ES}$). ES causes the long transmission interruption to find the best receiver beam and results in the disappearance of the line in Figure \ref{fig:Imple_result}.
Meanwhile, SSB is periodically broadcasted via the BS by using different transmit beams, obtaining further information regarding the beam quality of the other transmit beams. After sufficient information is achieved, the UE informs the BS to transmit data to another beam and switches a proper receiver beam corresponding to the transmit beam. Therefore, performance improves at stage III. 
Notably, ES takes a long time ($S\times M_{\rm ES}$ slots) to collect such information through the joint transmission and receiver beam sweeping, although data transmission is still maintained during this period.
In summary, Table \ref{tab:complex} shows that in stages II and III, Fast-ABS is faster than ES because of the fact that Fast-ABS only requires a small number of received beams ($M_{\rm ABS}\ll M_{\rm ES}$) to extract three tuples.
At stage IV, module 2 is blocked by a hand. Fast-ABS rapidly switches to module 1 rapidly while ES spends a long time ($S\times M_{\rm ES}\times P$ slots) to change the array and beam pair. If the ES scheme is employed, data cannot be received (the disappearance of the blue line) because the UE switches to module 1 and searches for receiver beams. Finally, at stage V, hand blockage is removed from module 2. As expected, Fast-ABS rapidly switches to module 2 while ES spends a long time ($S\times M_{\rm ES}\times P$ slots) to detect the antenna module and beam pair.
In summary, regardless of the presence of hand blockage (stages IV and V), Fast-ABS, unlike ES, does not need to perform beam scanning again; instead, it switches to a proper beam directly on the basis of the virtual channel. Therefore, as shown in Table \ref{tab:complex}, Fast-ABS does not entail time complexity in the two stages.

\begin{figure*}
    \begin{center}
        \resizebox{6.75in}{!}{%
            \includegraphics*{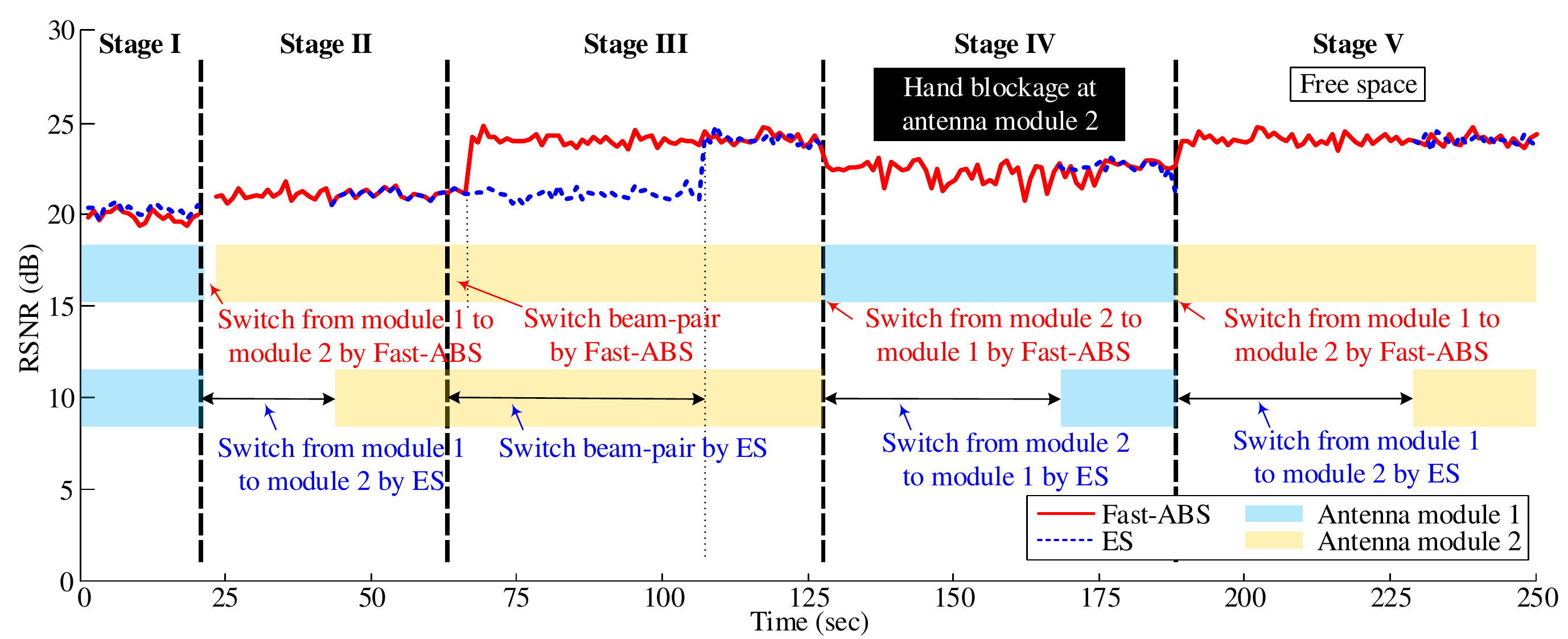} }%
        \caption{RSNR trajectory during Fast-ABS over the mmWave testbed. The corresponding RSNRs of Fast-ABS and ES are plotted with red and blue
        dotted lines, respectively.
            }\label{fig:Imple_result}
    \end{center}
\end{figure*}

\begin{table*}[]
\centering \footnotesize
\caption{Results of subarrays over the mmWave testing platform.}\label{tab:Imple_result}
\begin{tabular}{lrllll}
\toprule
\multicolumn{2}{c}{\multirow{2}{*}{}} & \multicolumn{2}{c}{Antenna module 1} & \multicolumn{2}{c}{Antenna module 2} \\ \cmidrule{3-6}
\multicolumn{2}{c}{} & BS beam 1 & BS beam 2 & BS beam 1 & BS beam 2 \\ \midrule
\multirow{3}{*}{Stage I: Established by initial beam pair} & AoA & \colorbox{light-gray}{130.25$^\circ$} & - & - & - \\
 & $\phi$ & \colorbox{light-gray}{135$^\circ$} & - & - & - \\
 & RSNR & \colorbox{light-gray}{19.86 dB}& - & - & - \\ \midrule
\multirow{3}{*}{Stage II: Change antenna module} & AoA & 130.25$^\circ$ & - & \colorbox{light-gray}{49.80$^\circ$} & - \\
 & $\phi$ & 135$^\circ$ & - & \colorbox{light-gray}{45$^\circ$} & - \\
 & RSNR & 19.86 dB& - & \colorbox{light-gray}{21.25 dB} & - \\ \midrule
\multirow{3}{*}{Stage III: Change beam-pair} & AoA & 130.25$^\circ$ & 116.85$^\circ$ & 49.80$^\circ$ & \colorbox{light-gray}{40.01$^\circ$} \\
 & $\phi$ & 135$^\circ$ & 120$^\circ$ & 45$^\circ$ & \colorbox{light-gray}{45$^\circ$} \\
 & RSNR & 19.86 dB & 21.95 dB & 21.25 dB & \colorbox{light-gray}{24.69 dB} \\ \midrule
\multirow{3}{*}{Stage IV: Hand blockage at module 2} & AoA & 128.13$^\circ$ & \colorbox{light-gray}{115.90$^\circ$} & - & - \\
 & $\phi$ & 135$^\circ$ & \colorbox{light-gray}{120$^\circ$} & - & - \\
 & RSNR & 21.66 dB & \colorbox{light-gray}{22.74 dB} & 13.44 dB & 13.43 dB \\ \midrule
\multirow{3}{*}{Stage V: Hand away from module 2} & AoA & 130.25$^\circ$ & 116.85$^\circ$ & 49.80$^\circ$ & \colorbox{light-gray}{40.01$^\circ$} \\
 & $\phi$ & 135$^\circ$ & 120$^\circ$ & 45$^\circ$ & \colorbox{light-gray}{45$^\circ$} \\
 & RSNR & 19.86 dB & 21.95 dB & 21.25 dB & \colorbox{light-gray}{24.69 dB} \\
\bottomrule
\end{tabular}
\end{table*}

\begin{table}[]
\centering 
\caption{Comparison of ES and Fast-ABS complexity.}\label{tab:complex}
\begin{tabular}{lcc}
\hline
 & ES & Fast-ABS \\ \hline
Stage I: Established by initial beam pair & - & - \\ \hline
Stage II: Change antenna module & $M_{\rm ES}$ & $M_{\rm ABS}$ \\ \hline
Stage III: Change beam-pair & $S\times M_{\rm ES}$ & $S\times M_{\rm ABS}$ \\ \hline
Stage IV: Hand blockage at module 2 & $S\times M_{\rm ES}\times P$ & 0 \\ \hline
Stage V: Hand away from module 2 & $S\times M_{\rm ES}\times P$ & 0 \\ \hline
&&unit: slots
\end{tabular}
\end{table}

\section{Conclusion}

We addressed the problem of determining the optimal antenna module and beam pair for a mobile phone consisting of multiple antenna modules with only one antenna module that could be powered on at a time. We presented Fast-ABS by exploiting the propagation invariant property. In the proposed method, only one antenna module was used for the reception to predict the best beam of other antenna modules. In particular, Fast-ABS estimated the corresponding channel of other antenna modules by extracting three tuples of each multipath associated with one antenna module. We performed a thorough theoretical analysis to demonstrate that if a proper set of beams could be used, then the low MSE of the three-tuple parameters could be achieved with a small number of received beam observations. Simulations demonstrated that Fast-ABS might achieve almost the same performance as that of a method through which all angles and antenna modules were scanned. Furthermore, we implemented and evaluated Fast-ABS in a 5G NR device. A series of experimental results supported our arguments and indicated that the performance of Fast-ABS was close to that of an oracle solution even in complex NLoS scenarios.

\section*{Appendix A}
Substituting \eqref{eq:dev_F} into \eqref{eq:fisher_ele_sim}, we can derive $\qF(\qpsi) = \frac{2}{\sigma^2_{z}}\qP \odot \qQ$, 
where
\begin{equation}
\qP =
\begin{bmatrix}
N_s &|g|N_s  & |g|N_s &\pi|g|(N_s-1) \\
\cdot &|g|^2N_s & |g|^2N_s & \pi|g|^2(N_s-1)\\
\cdot & \cdot &|g|^2N_s  & \pi|g|^2(N_s-1)\\
\cdot & \cdot & \cdot &\frac{2\pi^2|g|^2(2N_s^2-3N_s+1)}{3N_s}
\end{bmatrix},
\end{equation}
is a symmetric matrix,
\begin{equation*}
\qQ =
\begin{bmatrix}
\Re\{A\} & -\Im\{A\} & \Re\{B\} & \Im\{A\}\\\
\Im\{A\} & \Re\{A\} & \Im\{B\} & -\Re\{A\}\\
\Re\{C\} & -\Im\{C\} & \Re\{D\} & \Im\{C\}\\
-\Im\{A\} & -\Re\{A\} & -\Im\{B\} & \Re\{A\}
\end{bmatrix},
\end{equation*}
and
\begin{subequations} \label{eq:AtoD}
\begin{align}
A &\triangleq \sum_{m=1}^{M_s}A_{m}^*(\theta)A_{m}(\theta),
&&B \triangleq \sum_{m=1}^{M_s}A_{m}^*(\theta){A}_{m}'(\theta),\\
C &\triangleq \sum_{m=1}^{M_s}{A}_{m}'^*(\theta)A_{m}(\theta),
&&D \triangleq \sum_{m=1}^{M_s}{A}_{m}'^*(\theta){A}_{m}'(\theta).
\end{align}
\end{subequations}
The CRLB of each parameter can be obtained from \eqref{eq:CRLB} as
\begin{subequations}
\begin{align}
\sigma^2_{|g|} &\triangleq [(\qF^{-1}(\qpsi))]_{1, 1} = \frac{\sigma^2_{z}(\Im\{B\}\Im\{C\}+\Re\{A\}\Re\{D\})}{2N_sJ}, \label{eq:sig_absg} \\
\sigma^2_{\angle g} &\triangleq [(\qF^{-1}(\qpsi))]_{2, 2} = -\frac{\sigma^2_{z}(3J(1-N_s)-Q(1+N_s))}{2N_s|g|^2\Re\{A\}(N_s+1)J},\\
\sigma^2_{\theta} &\triangleq [(\qF^{-1}(\qpsi))]_{3, 3} = \frac{\sigma^2_{z}|A|^2}{2N_s|g|^2J}, \label{eq:sig_theta}\\
\sigma^2_{\tau} &\triangleq [(\qF^{-1}(\qpsi))]_{4, 4} = \frac{3N_s\sigma^2_{z}}{2\pi^2|g|^2\Re\{A\}(N_s^2-1)},
\end{align}
\end{subequations}
where
\begin{align}
J &\triangleq |A|^2\Re\{D\}-\Im\{A\}\Im\{B\}\Re\{C\}-\Im\{A\}\Im\{C\}\Re\{B\} \notag \\
&~~~~+\Im\{B\}\Im\{C\}\Re\{A\}-\Re\{A\}\Re\{B\}\Re\{C\}, \label{eq:J}\\
Q &\triangleq \Re\{A\}^2\Re\{D\}-\Re\{A\}\Re\{B\}\Re\{C\}.
\end{align}

{\renewcommand{\baselinestretch}{1.1}
\begin{footnotesize}

\end{footnotesize}}

\end{document}